\definecolor{green}{rgb}{0,0.5977,0}
\newcommand{\rr}{\mathbb R}
\newcommand{\zz}{\mathbb Z}
\newcommand{\abs}[1]{\left|{#1}\right|}
\newcommand{\genseq}[3]{{#1}_1 {#3} {#1}_2 {#3} \dots {#3} {#1}_{#2}}
\newcommand{\seq}[2]{\genseq{#1}{#2}{,}}
\newcommand{\threecases}[6]{\begin{cases} #2 & #1 \\ #4 & #3 \\ #6 & #5 \end{cases}}
\newcommand{\txt}[1]{\text{#1}}
\newcommand{\stext}[1]{\ \ \ \ \ \text{(#1)}}
\newcommand{\snc}[1]{\stext{since ${#1}$}}
\newcommand{\indhyp}{\stext{by the inductive hypothesis}}
\newcommand{\push}{\\ & \ \ \ \ \ \ \ \ \ \ }
\newcommand{\ipnc}[3]{\begin{figure}[H]\begin{center}\includegraphics[scale = {#1}]{#2.pdf}\caption{#3}\end{center}\end{figure}}
\g@addto@macro{\@algocf@init}{\SetKwInOut{Parameter}{Parameters}} 
\theoremstyle{plain}
\newtheorem{theorem}{Theorem}
\newtheorem{lemma}[theorem]{Lemma}
\newtheorem{corollary}[theorem]{Corollary}
\theoremstyle{definition}
\numberwithin{theorem}{section}
\newcommand{\sts}{\mathsf{sp}}
\newcommand{\emdash}{\,---\,}
\newcommand{\dist}{\mu^\star}
\newcommand\WithProb[2]{%
	\KwSty{with probability} ${#1}$ \WithProb@do{#2}
}
\newcommand\WithProbElse[3]{%
	\KwSty{with probability} ${#1}$ \WithProbElse@do{#2}
	\WithProbElse@else{#3}
}
\begin{document}
\title{Compact Redistricting Plans Have Many Spanning Trees}
\author{Ariel D. Procaccia\\
Harvard University
\and Jamie Tucker-Foltz\\
Harvard University}
	\maketitle
	\begin{abstract}
		In the design and analysis of political redistricting maps, it is often useful to be able to sample from the space of all partitions of the graph of census blocks into connected subgraphs of equal population. There are influential Markov chain Monte Carlo methods for doing so that are based on sampling and splitting random spanning trees. Empirical evidence suggests that the distributions such algorithms sample from place higher weight on more ``compact'' redistricting plans, which is a practically useful and desirable property. In this paper, we confirm these observations analytically, establishing an inverse \emph{exponential} relationship between the total length of the boundaries separating districts and the probability that such a map will be sampled. This result provides theoretical underpinnings for algorithms that are already making a significant real-world impact.
	\end{abstract}
	
\section{Introduction}\label{secIntro}

In April 2021, the US Census Bureau released the 2020 apportionment counts: the tally of residents of the 50 states, which determines the number of seats each state is entitled to in the House of Representatives for the next decade. This kicks off the complicated and contentious process of redrawing the states' congressional districts, which, we expect, will prove to be a historically impactful application of algorithms to societal questions.

Although this process of \emph{redistricting} is mandated by the US Constitution and further spelled out by the states, the law allows significant flexibility and\emdash even in small states\emdash an astronomical number of possible plans. Since the early 19th Century, partisan actors have exploited this flexibility to engineer plans that give their parties an unfair advantage, a phenomenon known as \emph{gerrymandering}.

From the viewpoint of computer science, ``good'' redistricting is a natural algorithmic problem, and indeed there is a longstanding interest in algorithms for redistricting~\cite{HOR72}. However, it is only in the last few years that computer scientists and mathematicians have joined the fight against gerrymandering in earnest~\cite{Duch18,Pro19b}. 

The algorithmic approach that has been most successful in terms of policy impact is that of sampling a distribution over all feasible plans by running a Markov chain Monte Carlo (MCMC) algorithm, thereby generating an ensemble of ``representative'' plans. Such ensembles have been used\emdash including in a number of successful legal challenges to state redistricting plans in Pennsylvania, North Carolina, Michigan, Wisconsin and Ohio\emdash to determine whether implemented plans are statistical outliers, which suggests that they resulted from gerrymandering~\cite{Duch18b,CR15,HRM17,CFP17}. This method has the advantage of being able to discern whether proposed plans are fair in light of each state's unique political geography. 

In our view, the most influential MCMC method is \emph{ReCom} (shorthand for ``Recombination'')~\cite{DDS21}. For the upcoming cycle, at least two redistricting commissions will rely on ReCom to evaluate redistricting plans~\cite{Chen21}: the Michigan Independent Citizens Redistricting Commission (which is vested with the authority to adopt redistricting plans for the state) and the Wisconsin People’s Maps Commission (which was appointed by the governor to prepare plans for consideration by the state legislature). 

ReCom starts from an arbitrary plan represented as a partition of a graph where the vertices are \emph{census blocks}, which form the building blocks of districts,\footnote{Some states use precincts or counties instead.} and there is an edge between two vertices if the corresponding census blocks are adjacent. In each step, ReCom randomly selects a pair of adjacent districts, merges them together, then re-partitions the merged region into two new districts. The algorithm accomplishes the re-partitioning step by uniformly sampling a spanning tree of the merged region from the set of all such spanning trees. It then attempts to cut an edge of the spanning tree so that the two subtrees induce two new districts with roughly equal populations (which is a constitutional requirement); if there are multiple such edges it selects uniformly at random among them, and if there are none it samples a new spanning tree. The Markov chain is run for a fixed number of steps and the final plan is returned.

The case for ReCom rests on its ability to generate plans consisting of \emph{compact} districts with regular shapes. A practical and well-studied measure for compactness in the graph partitioning setting is to count the total number of \emph{cut edges}\emdash the edges whose endpoints lie in different districts; plans with fewer cut edges are more compact \cite{DDS21, NPRP}. Empirically, ReCom does generate compact plans according to this measure. 

By contrast, a theoretical compactness result had been out of reach. It is known that, with some slight technical modifications to the recombination step, the stationary distribution of the ReCom chain is the \emph{spanning tree distribution}, where the probability of a plan is proportional to the number of forests that span its districts or, equivalently, the product over districts of the number of spanning trees of each district \cite{ReversibleReCom}. The compactness of ReCom-generated plans (assuming sufficient mixing\footnote{Empirically, ReCom mixes extremely quickly, though this has not be established in any formal sense. In fact, it is still an open question whether the state space is connected, even when the census block graph is a square grid! However, a breakthrough result by \cite{CFP17} shows that, for a reversible Markov chain like that of \cite{ReversibleReCom}, it is possible to conduct meaningful statistical outlier tests \emph{without} mixing.}), therefore, depends on the relation between the number of spanning trees and the number of cut edges of a plan. It is perhaps intuitive that such a relation exists; for example, a rectangular district with few cut edges has many spanning trees, whereas a snaky district that consists of the same number of census blocks has many cut edges and few spanning trees (see Figure \ref{figComparingTwoPlans}).

Our goal is to formalize this intuition and quantify it. We aim to provide theoretical underpinnings for the observed compactness of ReCom-generated plans, further justifying the important role of this algorithm in redistricting.

\subsection{Main Result and Technique}\label{subMainResult}

Our main result, Theorem~\ref{thmSpanningTreeDistribution}, is best understood through a corollary. To state it informally, consider two partitions $\mathcal{P}_1$ and $\mathcal{P}_2$ of the census block graph (which is planar) into $m$ districts of equal size, and let $\abs{\partial \mathcal{P}_1}$ and $\abs{\partial\mathcal{P}_2}$ denote their \emph{discrete perimeters} (total number of cut edges). In addition, denote the spanning tree distribution by $\dist$. 

\newpage

\noindent\textbf{Corollary~\ref{corMain}} (informal version).  
\emph{For any pair of $m$-partitions $\mathcal{P}_1$ and $\mathcal{P}_2$ of a planar graph $G$ such that the second-largest degrees of $G$ and its dual are upper-bounded by a constant,
    $$\frac{\Pr_{\dist}[\mathcal{P}_1]}{\Pr_{\dist}[\mathcal{P}_2]} \geq 2^{\Theta\left(\frac{\abs{\partial\mathcal{P}_2}}{\abs{\partial\mathcal{P}_1}}\right)}.$$}%

In words, the corollary establishes an asymptotic \emph{exponential} relationship between the ratio of probabilities under $\dist$ of the two partitions and the inverse ratio of their discrete perimeters. It is reassuring that the relationship is exponential (rather than, say, linear): As noted by DeFord et al.~\cite{DDS21}, the space of all redistricting plans is generally dominated by non-compact plans, so a significant skew towards compact plans is required for it to be likely that such a plan would be sampled. We emphasize that this result does not imply anything about the probability of say, sampling a partition with less than a given number of cut edges. For that, one would additionally need bounds on the relative numbers of balanced partitions of a given compactness that exist.

Theorem~\ref{thmSpanningTreeDistribution} itself is not asymptotic; rather, it gives a precise relationship between the probabilities under $\dist$ and the discrete perimeters of different partitions. This relationship, in turn, depends on degree bounds that we formalize in Section~\ref{subBoundedness}. For example, for large grid graphs the theorem implies that if $|\partial\mathcal{P}_2|\geq 7.23\times |\partial\mathcal{P}_1|$ then $\Pr_{\dist}[\mathcal{P}_1]\geq \Pr_{\dist}[\mathcal{P}_2]$; the bounds are quite similar for planar graphs corresponding to real redistricting instances. 

The basic idea behind the proof of the theorem is to take an arbitrary partition $\mathcal{P}$ of $G$ into $m$ districts and add $m - 1$ edges to connect the subgraphs of each district together into a single connected graph $H$ (see Figure \ref{figAddTreeEdges}). It is not too hard to see that the number of cut edges of $\mathcal{P}$ is precisely $\abs{E(G)} - \abs{E(H)} + (m - 1)$, and the likelihood of sampling $\mathcal{P}$ is proportional to the number of spanning trees of $H$. Thus, it suffices to establish a relationship between the number of edges in $H$ and the number of spanning trees in $H$.

Our approach is to imagine iteratively removing edges from $G$ until we are left with $H$ and compute upper and lower bounds for the average factor by which the number of spanning trees decreases at each iteration (Lemma \ref{lemCountingTreesBoundsGeneral}). Specifically, we are interested in bounding the probability that a given edge is contained in a uniformly random spanning tree of the current graph. This quantity is known as the \emph{effective resistance} of the edge due to an alternative, equivalent definition in terms of the electrical resistance across the edge in a network of resistors (see Section \ref{subEffectiveResistance}). We use the electrical formulation of the problem to derive useful bounds for counting spanning trees in our particular setting. While these bounds do not imply that the effective resistance of each edge deleted from $G$ is upper and lower bounded by fixed constants at every iteration (as this is generally not true), we argue that the geometric mean of the effective resistances is. For this, we use a potential function to amortize the extremely high and low factors over the less extreme factors accumulated over previous iterations.

\subsection{Related Work}\label{subLitReview}

The ReCom Markov chain is one of many Markov chains on the space of graph partitions that have been studied in the context of redistricting \cite{DDS21, Chain1, Chain2, CFP17, Forests}. Most of the predecessors of ReCom are based on the ``Flip'' chain, whereby a single census block on the boundary of a district is reassigned at each step. While computing the transition function is easier for Flip than for ReCom, random walks using the Flip chain mix extremely slowly, and produce non-compact districts by any reasonable metric. Empirically speaking, ReCom is a great improvement over Flip.

From a theoretical perspective, very little is known about the properties of the spanning tree distribution from which ReCom samples. DeFord et al.~\cite{DDS21} give some informal, intuitive arguments for why we should expect it to favor compact partitions\emdash for example, it is easy to see that adjoining a long ``tentacle'' to one of the districts in an otherwise compact partition will reduce the number of spanning trees by a large factor. It is conjectured that, among all grid subgraphs of the same number of vertices, square subgrids (which have minimal perimeter) have the largest number of spanning trees. Kenyon \cite{PriorTheory1} provides an encouraging result to this end, asymptotically counting spanning trees in finer and finer grids approximating a rectilinear polygon in $\rr^2$. A very recent paper by Tapp \cite{Tapp} gives concrete, non-asymptotic bounds on the number of spanning trees of a grid subgraph in terms of its perimeter and number of vertices, but they are not strong enough to resolve this conjecture either.

There is a large body of work in the combinatorics literature on approximately counting spanning trees in graphs of bounded degree (see, for example, \cite{TreesAlon, TreesUpperAndLower, TreesMcKay, TreesInCirculent}). Given two subgraphs $H_1$ and $H_2$ of $G$ as in Section \ref{subMainResult}, a natural line of attack for our main result is to apply these bounds to each $H_i$. However, these bounds are insufficient for our purposes since they have a multiplicative error term which is exponential in the number of vertices of $H_i$. For example, if we assume $G$ is 3-regular, it is known that the number of spanning trees in $H_i$ is bounded between $1.62^{\abs{V(H_i)}}$ and $2.31^{\abs{V(H_i)}}$ \cite{TreesUpperAndLower, TreesMcKay}. Our result requires the error to be on the order of $2^{O(\abs{E(G)} - \abs{E(H_i)})}$, which can be significantly smaller than $2^{\Theta(\abs{V(H_i)})}$. Our alternative analysis based on effective resistances overcomes this difficulty, since it does not accumulate error for each edge in $E(H_i)$, but instead for the edges in $E(G) \setminus E(H_i)$.

Connections between effective resistance and discrete perimeters have been studied before. A well-known example is the Nash-Williams Inequality \cite[(2.13)]{NashWilliams}, which yields a lower bound for effective resistance in terms of sets of edges that separate the graph. Benjamini and Kozma \cite{BenjaminiKozma} give an upper bound for effective resistance via sums of isoperimetric quantities for connected sets containing the two vertices. These results are largely orthogonal to our work, with the exception of Footnote~\ref{ftn}.

Our contribution can be viewed as a positive result about the computational tractability of approximately sampling graph partitions from ``nice'' distributions. By contrast, Najt, Deford, and Solomon \cite{NPRP} establish hardness of several related sampling problems motivated by redistricting, mostly via reductions from the Hamiltonian Cycle problem. For example, they show that, for any $\lambda \in (0, 1]$, there is no polynomial time algorithm to approximately sample $k$-partitions of an input graph $G$ proportional to $\lambda^{\abs{\txt{cut edges}}}$ unless $\txt{NP} = \txt{RP}$. Even if ReCom could be shown to run in polynomial time, this result still would not contradict ours because (1) the graphs produced by the reduction do not satisfy the (realistic) conditions of our main theorem, (2) the approximation guarantee is much more stringent than ours, and (3) their result allows for partitions that are not even approximately balanced.

\section{Preliminaries}\label{secPreliminaries}

All graphs we consider are undirected and unweighted, but may have multiple edges and/or self-loops. For any connected, planar embedded graph $G$, we write $G^*$ for the dual of $G$, which is the graph whose vertex set consists of the faces of $G$ with respect to the embedding, where there is an edge in $G^*$ between two faces whenever they share a common edge on their boundaries in $G$. For any face $f \in V(G^*)$, we overload the notation $\deg(f)$ to mean the degree of $f$ as a vertex in $G^*$, i.e., the number of edges/vertices on the boundary of $f$ in $G$.

\subsection{Graph Partitions}\label{subSpanningTreeDistributionDescription}

Our central object of study is the \emph{census block graph}, in which the vertices represent census blocks, and there is an edge between two vertices if the census blocks share a border of nonzero length (or are legally considered adjacent for other reasons, e.g., in the case of islands). See Figure \ref{figComparingTwoPlans} for an example of a census block graph.

\ipnc{.7}{ComparingTwoPlans3}{\label{figComparingTwoPlans}In center, 12 counties in the Southeast corner of Iowa and the corresponding census block graph (in Iowa, counties are actually the atomic units for redistricting, not census blocks). On the left and right, two potential 3-partitions into connected districts of 4 counties each, where the red dashed lines are cut edges.}

For simplicity, we consider an idealized redistricting setting where all census blocks have equal population and districts must be exactly population-balanced. Thus, we define an \emph{$m$-partition} of a graph $G$ to be a partition $\mathcal{P} = \{\seq{D}{m}\}$ of the vertex set of $G$ such that each $D_i$, which we call a \emph{district}, has exactly $\frac{\abs{V(G)}}{m}$ vertices and induces a connected subgraph. Given an $m$-partition $\mathcal{P}$ of $G$, we write $G / \mathcal{P}$ to denote the graph obtained by contracting the induced subgraphs of all the districts. A \emph{cut edge} of an $m$-partition $\mathcal{P}$ is an edge with endpoints in different districts. For example, the 3-partition on the left of Figure \ref{figComparingTwoPlans} has 13 cut edges, while the 3-partition on the right only has 8. We write $\partial \mathcal{P}$ for the set of cut edges of $\mathcal{P}$. 

The \emph{spanning tree score} of a graph $G$, written $\sts(G)$, is the number of spanning trees of $G$. The \emph{spanning tree score} of an $m$-partition $\mathcal{P}$, written $\sts(\mathcal{P})$, is defined as the product of the spanning tree scores of the induced subgraphs of each of the districts in $\mathcal{P}$. The spanning tree score of the 3-partition on the left of Figure \ref{figComparingTwoPlans} is $1 \times 1 \times 1 = 1$, whereas the spanning tree score of the 3-partition on the right is $8 \times 3 \times 8 = 192$.

\subsection{Planar Graphs With Bounded Vertex and Face Degrees}\label{subBoundedness}

Typical instances to the graph partitioning problem that arise in redistricting have several additional properties:
\begin{itemize}
	\item The census block graph is connected, planar, and does not contain any self-loops, leaves, or bridges.\footnote{It does sometimes occur that one precinct will be surrounded by another one on all sides, in which case it is a leaf. However, this is rare, and can easily be modeled by just merging the two precincts together.}
	\item All census blocks have low degree.
	\item No large group of census blocks intersect at the same boundary point (e.g., in the graph of states in the USA, there is a ``Four Corners'' location between Colorado, Utah, Arizona, and New Mexico, but there is no ``Five Corners'' or greater).
\end{itemize}
This motivates the following definition. For any positive integers $k_1$ and $k_2$, we say that $G$ is \emph{$(k_1, k_2)$-bounded} if $G$ is connected, neither $G$ nor $G^*$ have a self loop, and there exists a planar embedding of $G$, a vertex $v_0 \in V(G)$, and a face $f_0 \in V(G^*)$ such that, for all $v \in V(G) \setminus \{v_0\}$, $\deg(v) \leq k_1$, and for all $f \in V(G^*) \setminus \{f_0\}$, $\deg(f) \leq k_2$.

Simply put, $k_1$ and $k_2$ upper bound the second-largest degrees in $G$ and $G^*$, respectively. Think of $f_0$, the face of unbounded degree, as the outer face of the census block graph. It would be impractical to impose a bound on the number of census blocks this face touches. The vertex $v_0$ has no specific meaning in our redistricting context; we allow for such a vertex merely for generality and symmetry.

For example, grid graphs are $(4, 4)$-bounded, the subgraph of counties in Iowa shown in Figure \ref{figComparingTwoPlans} is $(5, 3)$-bounded, and the entire graph of all Iowa counties happens to be $(6, 4)$-bounded.

\subsection{Effective Resistance}\label{subEffectiveResistance}

We now briefly review some tools from spectral graph theory that we will need shortly. For more background, we refer the reader to Chapters 12 and 13 of Spielman \cite{SAGT}.

Consider the following physics problem. We are given a graph $G$ and a specific edge $e^* \in E(G)$. We place a resistor of unit resistance on every edge (including $e^*$), hook up a power supply between the endpoints of $e^*$ (call them $a$ and $b$), and adjust the voltage so that 1 unit of current is flowing into $a$ and out of $b$. The \emph{effective resistance} of $e^*$, denoted $R_{ab}$, is defined as the voltage differential between $a$ and $b$ under this setup. 

Formally, this voltage difference can be computed by enforcing Ohm's law ``$V = IR$'' (voltage equals current times resistance) across every edge. Specifically, we wish to find voltages $v(c)$ for every vertex $c$ and currents $i(e)$ for every oriented edge $e$ such that:
\begin{itemize}
	\item There is 1 net flow out of $a$.
	\item There is 1 net flow into $b$.
	\item For all $c \notin \{a, b\}$ there is zero net flow in/out of $c$.
	\item (Without loss of generality) $v(b) = 0$.
	\item For any edge $e$ from vertex $c$ to vertex $d$, $v(c) - v(d) = i(e)$.
\end{itemize}
Given any edge $e^*$, and picking the arbitrary orientation of $e^*$ from $a$ to $b$, there is a unique solution of $v(\cdot)$ and $i(\cdot)$ satisfying these constraints. The effective resistance of $e^*$ is $R_{ab} = v(a) = i(e^*)$.

\ipnc{.8}{EffectiveResistanceExample}{\label{figEffectiveResistanceExample}Computing the effective resistance across the edge between $a$ and $b$.}

For example, consider the graph on the left in Figure \ref{figEffectiveResistanceExample}. To determine the effective resistance between $a$ and $b$, we compute the unique voltages (brown) and currents (blue) satisfying the constraints, as shown on the right. The voltage at $a$ and the current from $a$ to $b$ are both $\frac58$, so the effective resistance is $R_{ab} = \frac58$. 

Our interest in effective resistance stems from the following well-known statement, which gives an equivalent definition in terms of spanning trees.

\begin{lemma}\label{lemEffectiveResistanceAndTrees}
	For any edge $e^* \in E(G)$ between two vertices $a, b \in V(G)$, the effective resistance $R_{ab}$ is equal to the probability that $e^*$ is in a uniformly chosen spanning tree of $G$.
\end{lemma}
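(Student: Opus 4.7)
The plan is to identify both sides of the claimed identity with the same determinantal ratio by combining Kirchhoff's Matrix-Tree Theorem with the Matrix Determinant Lemma. The starting point is the standard deletion-contraction identity: every spanning tree of $G$ either contains $e^*$ or avoids it, and those containing $e^*$ are in bijection with spanning trees of the contraction $G/e^*$, so
$$\Pr[e^* \in T] \;=\; \frac{\sts(G/e^*)}{\sts(G)} \;=\; \frac{\sts(G) - \sts(G \setminus e^*)}{\sts(G)} \;=\; 1 - \frac{\sts(G \setminus e^*)}{\sts(G)}.$$
It therefore suffices to show $\sts(G \setminus e^*) / \sts(G) = 1 - R_{ab}$.

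Next, I would adopt the gauge $v(b) = 0$ already fixed in the lemma statement and let $\tilde L$ denote the Laplacian of $G$ with the row and column indexed by $b$ deleted. Kirchhoff's theorem gives $\sts(G) = \det \tilde L$. Removing $e^*$ subtracts the rank-one matrix $(\chi_a - \chi_b)(\chi_a - \chi_b)^\top$ from the full Laplacian $L$; after restricting to coordinates $V(G) \setminus \{b\}$, this becomes $\chi_a \chi_a^\top$, where $\chi_a$ is now the standard basis vector in the reduced space. The Matrix Determinant Lemma then yields
$$\sts(G \setminus e^*) \;=\; \det\!\bigl(\tilde L - \chi_a \chi_a^\top\bigr) \;=\; \det(\tilde L)\,\bigl(1 - \chi_a^\top \tilde L^{-1} \chi_a\bigr).$$

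The final step is to recognize $\chi_a^\top \tilde L^{-1} \chi_a$ as $R_{ab}$. The five Kirchhoff/Ohm constraints spelled out in the lemma statement, combined with $v(b) = 0$, are precisely the linear system $\tilde L \, \tilde v = \chi_a$ on the reduced vertex set; so $\tilde v = \tilde L^{-1} \chi_a$, and by the paper's definition $R_{ab} = v(a) - v(b) = \chi_a^\top \tilde v = \chi_a^\top \tilde L^{-1} \chi_a$. Substituting back gives $\sts(G \setminus e^*)/\sts(G) = 1 - R_{ab}$, which together with the first display yields the lemma.

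The main obstacle is purely bookkeeping rather than conceptual: one has to track carefully what happens to the rank-one update $(\chi_a - \chi_b)(\chi_a - \chi_b)^\top$ when the $b$-coordinate is stripped, and confirm that multi-edges are handled correctly (deleting one copy of $e^*$ subtracts exactly one $\chi_a \chi_a^\top$). A small degenerate case worth verifying separately is when $G \setminus e^*$ is disconnected: then $\det(\tilde L - \chi_a \chi_a^\top) = 0$ forces $\chi_a^\top \tilde L^{-1} \chi_a = 1$, which matches the physical intuition that every unit of injected current must traverse $e^*$ and therefore produces a unit voltage drop across it.
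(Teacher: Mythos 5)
The paper itself does not prove Lemma~\ref{lemEffectiveResistanceAndTrees}; it states it as a well-known fact and refers the reader to Spielman's notes. Your proof is correct and is essentially the standard determinantal derivation: deletion--contraction reduces the claim to $\sts(G \setminus e^*)/\sts(G) = 1 - R_{ab}$, the Matrix-Tree theorem expresses both spanning-tree counts as reduced Laplacian determinants, the Matrix Determinant Lemma converts the rank-one update from deleting $e^*$ into the factor $1 - \chi_a^\top \tilde L^{-1} \chi_a$, and identifying the linear system $\tilde L \tilde v = \chi_a$ with the Kirchhoff/Ohm constraints (under the gauge $v(b)=0$, with row $b$ redundant because $L$ has zero row sums) gives $\chi_a^\top \tilde L^{-1} \chi_a = v(a) = R_{ab}$. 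The bookkeeping concerns you flag are genuine but work out as you expect: with multi-edges the Laplacian entries count multiplicities, so deleting one copy of $e^*$ subtracts exactly one $(\chi_a-\chi_b)(\chi_a-\chi_b)^\top$, and stripping the $b$-coordinate turns this into $\chi_a\chi_a^\top$ because all three nonzero entries other than $(a,a)$ lie in row or column $b$. Your bridge observation is also fine, but it is a consequence of the identity rather than a separate case requiring verification. As a comparison note, other standard routes to the same lemma include Burton--Pemantle's transfer-current theorem (which gives the marginal of any single edge under the uniform spanning tree measure directly as a Green's function entry) and an energy/harmonic-function argument that avoids determinants; your approach is the most self-contained for a reader who already has the Matrix-Tree theorem in hand.
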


For example, one can verify by enumeration that the graph from Figure~\ref{figEffectiveResistanceExample} has 8 spanning trees, and exactly 5 of them include the edge $\{a, b\}$.

To derive useful bounds on effective resistance (and, therefore, on spanning trees) we briefly consider a more general version of the problem in which there are resistors in the graph with non-unit resistance. In that case, we simply replace the final condition with
$$v(c) - v(d) = i(e)r(e),$$
where $r(e)$ is the resistance of $e$. It is not too hard to see that deleting an edge is equivalent to setting its resistance to $\infty$, so that no current can possibly flow through it, while contracting an edge is equivalent to setting its resistance to $0$, so that both endpoints must have the same voltage.

\begin{lemma}[Rayleigh’s Monotonicity Principle]\label{lemRayleigh}
	For any vertices $a$ and $b$ of a graph $G$, weakly increasing the resistance of the resistor on any edge in $G$ weakly increases $R_{ab}$.
\end{lemma}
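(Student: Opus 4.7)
The plan is to reduce the lemma to \emph{Thomson's principle}, a variational characterization that writes the effective resistance as an energy minimization. Specifically, Thomson's principle states that $R_{ab} = \min_i \sum_{e} r(e) \, i(e)^2$, where the minimum ranges over all \emph{unit $a$-to-$b$ flows}, i.e., antisymmetric functions $i$ on oriented edges satisfying Kirchhoff's current law with unit net outflow at $a$, unit net inflow at $b$, and zero net flow elsewhere. Once this is established, the lemma is essentially immediate: if $r'(e) \geq r(e)$ for every edge $e$, then $\sum_e r'(e) i(e)^2 \geq \sum_e r(e) i(e)^2$ for every fixed flow $i$, and taking the minimum over $i$ on both sides yields $R^{r'}_{ab} \geq R^{r}_{ab}$.

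First I would prove Thomson's principle. The set of unit $a$-to-$b$ flows is a nonempty affine subspace of $\rr^{E(G)}$, and the energy $E_r(i) = \sum_e r(e) i(e)^2$ is a strictly convex quadratic on it (assuming positive resistances), so it attains a unique minimizer $i^*$. The tangent space to this affine space consists of circulations, and the indicator of any oriented cycle $C$ is a circulation. Setting the directional derivative of $E_r$ at $i^*$ along each such indicator to zero forces $\sum_{e \in C} r(e) i^*(e) = 0$, which is Kirchhoff's voltage law. This cycle condition lets me define a vertex potential $v$ via $v(c) - v(d) = r(e) i^*(e)$ for every oriented edge $e = (c,d)$, matching the paper's definition of the electrical flow. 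A discrete integration by parts then gives
$$E_r(i^*) = \sum_{e=(c,d)} (v(c) - v(d)) \, i^*(e) = \sum_{c \in V(G)} v(c) \cdot (\text{net outflow at } c) = v(a) - v(b) = R_{ab},$$
using flow conservation at every $c \notin \{a,b\}$ and the normalization $v(b) = 0$.

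The main obstacle is the variational characterization itself; once it is in hand, the lemma reduces to a one-line corollary. Boundary cases are handled by convention: a contracted edge ($r(e) = 0$) contributes nothing to the energy regardless of current, while a deleted edge ($r(e) = \infty$) effectively restricts the admissible flows to those with $i(e) = 0$, so monotonicity is preserved under either interpretation (or via a limit $r(e) \to \infty$ using continuity). A purely voltage-based perturbation argument would also be conceivable, but the way a local resistance change propagates through the underlying linear system is not transparently sign-preserving, which is why the energy-minimization route is cleaner.
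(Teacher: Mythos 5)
The paper does not prove this lemma---it is stated as a known fact and the reader is pointed to Spielman's book (Chapters 12 and 13 of~\cite{SAGT}) for background---so there is no in-paper argument to compare against. Your proof via Thomson's principle is the standard textbook derivation (see, e.g., Doyle--Snell or Lyons--Peres) and it is correct: the variational formula $R_{ab} = \min_i \sum_e r(e)\, i(e)^2$ over unit $a$-to-$b$ flows makes monotonicity in $r$ an immediate consequence, since each summand is pointwise nondecreasing in $r(e)$ for a fixed flow, and the pointwise minimum of a family of nondecreasing functions is nondecreasing. Your derivation of Thomson's principle itself is also sound: strict convexity (for positive resistances) gives existence and uniqueness of the minimizing flow, the first-order stationarity conditions along cycle-indicator directions recover Kirchhoff's voltage law, path-independence (by connectivity) lets you integrate the tensions $r(e)i^*(e)$ to a potential $v$, and the discrete summation-by-parts identity $\sum_e r(e)i^*(e)^2 = \sum_c v(c)\cdot(\text{net outflow at }c) = v(a) - v(b)$ closes the loop with the paper's definition $R_{ab} = v(a)$ (taking $v(b)=0$). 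Your remarks on the degenerate cases $r(e)=0$ and $r(e)=\infty$ are also the right way to reconcile Thomson's principle with the contraction/deletion interpretations used later in the paper (in the proofs of Lemmas~\ref{lemERCycle} and~\ref{lemERDegree}).
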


The following lemmas use Rayleigh’s Monotonicity Principle to derive upper and lower bounds for effective resistances.\footnote{\label{ftn}We note that Lemma \ref{lemERDegree} can alternatively be derived as a special case of the Nash-Williams Inequality \cite[(2.13)]{NashWilliams} with one cutset.}

\begin{lemma}\label{lemERCycle}
	Let $a, c_1, c_2, \dots, c_{k - 2}, b$ be a simple cycle of length $k \geq 2$ in a network of unit resistors. Then $R_{ab} \leq 1 - \frac{1}{k}$.
\end{lemma}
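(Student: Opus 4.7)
The plan is to use Rayleigh's Monotonicity Principle to reduce the problem to a single cycle, and then to compute the effective resistance across a cycle using the standard series/parallel rules.

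First, I would observe that by Lemma~\ref{lemRayleigh}, we can only weakly increase $R_{ab}$ by weakly increasing edge resistances. In particular, setting the resistance of every edge of $G$ not on the cycle $a, c_1, \ldots, c_{k-2}, b$ to $\infty$ (i.e., deleting those edges) yields a graph $G'$ consisting of just the cycle (together with possibly some now-isolated vertices), with $R_{ab}^{G'} \geq R_{ab}^G$. So it suffices to prove $R_{ab}^{G'} = 1 - \tfrac1k$.

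Next, within the cycle, the two vertices $a$ and $b$ are joined by exactly two internally-disjoint paths: the single edge $\{a,b\}$ (of total resistance $1$), and the path $a \to c_1 \to c_2 \to \cdots \to c_{k-2} \to b$ consisting of $k-1$ unit resistors in series (of total resistance $k-1$). These two paths are in parallel, so combining them gives
$$\frac{1}{R_{ab}^{G'}} \;=\; \frac{1}{1} + \frac{1}{k-1} \;=\; \frac{k}{k-1}, \qquad \text{hence} \qquad R_{ab}^{G'} \;=\; \frac{k-1}{k} \;=\; 1 - \frac{1}{k}.$$
Combining with the first step gives $R_{ab} \leq R_{ab}^{G'} = 1 - \tfrac1k$, as desired. (The boundary case $k=2$ corresponds to a multi-edge between $a$ and $b$ and gives $R_{ab} \leq \tfrac12$, which is consistent.)

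There is not really a significant obstacle here; the only thing worth verifying is that the series and parallel reductions used in the second step can be justified from the voltage/current definition in Section~\ref{subEffectiveResistance}. This is routine: series reduction follows because a degree-$2$ internal vertex $c_i$ of the path forces $i(e) = i(e')$ for its two incident edges, so Ohm's law sums the voltage drops; and parallel reduction follows by enforcing the current conservation at $a$ and $b$ together with the common voltage drop $v(a) - v(b)$ across both parallel branches. If preferred, one can also prove $R_{ab}^{G'} = 1 - \tfrac1k$ directly via Lemma~\ref{lemEffectiveResistanceAndTrees}, since the cycle on $k$ vertices has exactly $k$ spanning trees (each obtained by removing one edge), of which exactly $k-1$ contain the edge $\{a,b\}$.
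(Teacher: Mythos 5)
Your proof is correct and follows essentially the same approach as the paper: both delete all non-cycle edges via Rayleigh's Monotonicity Principle and then compute the effective resistance of the resulting cycle, with the paper verifying the current pattern directly where you invoke series/parallel reduction (or, in your alternative, the spanning-tree count of a $k$-cycle).
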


\begin{proof}
	Delete all edges (i.e., send resistances to infinity) except for the cycle. A simple calculation shows that, in the new network, $R_{ab} = 1 - \frac{1}{k}$, with $\frac{1}{k}$ units of current passing the ``long way'' around the cycle and $1 - \frac{1}{k}$ units of current passing through the given edge from $a$ to $b$. By Lemma \ref{lemRayleigh}, the effective resistance in the original network must be at most $1 - \frac{1}{k}$.
\end{proof}

\begin{lemma}\label{lemERDegree}
	Let $a$ and $b$ adjacent vertices in a network of unit resistors. Then $R_{ab} \geq \frac{1}{\deg(a)}$.
\end{lemma}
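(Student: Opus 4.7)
The plan is to directly apply Rayleigh's Monotonicity Principle by simplifying the network to a standard parallel configuration whose effective resistance equals $\frac{1}{\deg(a)}$ exactly, and whose effective resistance is a lower bound for the original.

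Concretely, I would start from the network $G$ with unit resistors on every edge and consider the operation of contracting every edge in $G$ that is not incident to $a$. Each such contraction corresponds to setting the resistance on that edge to $0$, which is a weak decrease of resistance. By Lemma~\ref{lemRayleigh}, the effective resistance between $a$ and $b$ in the contracted network, call it $R_{ab}'$, satisfies $R_{ab}' \leq R_{ab}$. So it suffices to show $R_{ab}' = \frac{1}{\deg(a)}$.

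After contraction, every vertex other than $a$ has been merged into a single vertex, which in particular contains $b$. The remaining edges are exactly the edges originally incident to $a$, and there are $\deg(a)$ of them (counting multiplicities from parallel edges); each becomes a unit resistor in parallel between $a$ and the contracted ``super-$b$'' vertex. Any self-loops that may have been created by contraction carry no current and can be ignored. Hence the contracted network is just $\deg(a)$ unit resistors in parallel, whose effective resistance is $\frac{1}{\deg(a)}$ by the standard parallel resistor formula (or by solving the linear system in the definition: by symmetry, each edge carries exactly $\frac{1}{\deg(a)}$ units of current, and Ohm's law $v(a) - v(b) = i(e) \cdot r(e) = \frac{1}{\deg(a)}$ gives the voltage drop). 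Combining this with the Rayleigh inequality yields $R_{ab} \geq R_{ab}' = \frac{1}{\deg(a)}$, as desired.

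There is no real obstacle here; the only care needed is to verify that (i) contraction is indeed captured by sending a resistance to $0$ (noted in the excerpt immediately before Lemma~\ref{lemRayleigh}), and (ii) the contracted network correctly accounts for multi-edges, so that the parallel count is $\deg(a)$ rather than the number of distinct neighbors.
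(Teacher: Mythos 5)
Your approach is the same as the paper's (contract away from $a$, then apply Rayleigh), but there is a gap in the claim that ``after contraction, every vertex other than $a$ has been merged into a single vertex.'' This fails whenever $a$ is a cut vertex of $G$: contracting all edges not incident to $a$ then yields several distinct super-vertices $s_1, \dots, s_k$ (one per connected component of $G \setminus \{a\}$), and the $\deg(a)$ edges from $a$ are distributed among them rather than all lying in parallel between $a$ and the super-vertex containing $b$. Nothing in the lemma's hypotheses rules out this case.

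There are two easy patches. The paper's fix is to additionally lower from $\infty$ to $0$ the resistance of a new edge joining each neighbor of $a$ to $b$\,---\,another weak decrease, so Rayleigh still applies\,---\,which merges all super-vertices into one and does put all $\deg(a)$ resistors in parallel, giving exactly $R'_{ab} = \frac{1}{\deg(a)}$. Alternatively, you can argue directly in your contracted network that no current can flow into the super-vertices other than the one containing $b$ (they are dead ends), so $R'_{ab} = \frac{1}{d}$ where $d$ is the number of edges from $a$ into $b$'s super-vertex; since $d \leq \deg(a)$, you still get $R_{ab} \geq R'_{ab} = \frac{1}{d} \geq \frac{1}{\deg(a)}$. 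Either repair is short, but as written your proof has a hole, and ``there is no real obstacle here'' overstates the case\,---\,the disconnection issue is precisely the subtlety the paper's extra step is addressing.
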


\begin{proof}
	Contract all edges (i.e., lower resistances to zero) except for the edges from $a$. Then add additional resistors of resistance zero joining each neighbor of $a$ to $b$ if not already adjacent (i.e., lower resistances from infinity to zero). A simple calculation shows that, in the new network, $R_{ab} = \frac{1}{\deg(a)}$, with $\frac{1}{\deg(a)}$ units of current on every edge from $a$, and zero current on every other edge. By Lemma \ref{lemRayleigh}, the effective resistance in the original network must be at least $\frac{1}{\deg(a)}$.
\end{proof}

\section{Main Result}\label{secProof}

In this section, we prove the following theorem, which shows that, for sufficiently large census block graphs, the spanning tree distribution\emdash denoted hereinafter by $\dist$\emdash favors partitions with smaller boundaries.

\begin{theorem}\label{thmSpanningTreeDistribution}
	For any positive integers $k_1$ and $k_2$, any $\alpha \geq 1$, and any $\varepsilon > 0$, let
	\begin{equation}\label{equLambdaDefinition}
	    \lambda = \lambda(k_1, k_2, \alpha, \varepsilon) := \frac{\log\left(\frac{1}{2k_2}\right) - \log(\alpha)}{\log\left(1 - \frac{1}{k_1}\right)} + \varepsilon.
	\end{equation}
	For any two $m$-partitions $\mathcal{P}_1$ and $\mathcal{P}_2$ of a $(k_1, k_2)$-bounded graph, if
	\begin{equation}\label{equPremiseBoundaryGap}
		\abs{\partial\mathcal{P}_2} \geq \lambda \abs{\partial\mathcal{P}_1}
	\end{equation}
	and
	\begin{equation}\label{equPremiseEpsilon}
		\abs{\partial\mathcal{P}_1} \geq \frac{m - 1}{\varepsilon},
	\end{equation}
	then $\Pr_{\dist}[\mathcal{P}_1]\geq \alpha\Pr_{\dist}[\mathcal{P}_2]$.
\end{theorem}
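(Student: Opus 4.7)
The plan is to transform the theorem into a statement about spanning-tree counts of two single graphs, bypassing the product-over-districts form. For each partition $\mathcal{P}_i$, I would build a connected spanning subgraph $H_i \subseteq G$ by taking the union of the induced subgraphs of its districts and then adding any $m-1$ cut edges that make $H_i$ connected; this is possible since $G/\mathcal{P}_i$ is connected. The $m-1$ added edges are bridges in $H_i$, so every spanning tree of $H_i$ contains all of them and restricts to a spanning tree on each district, giving $\sts(H_i) = \sts(\mathcal{P}_i)$ and thus $\Pr_\dist[\mathcal{P}_i] \propto \sts(H_i)$. A direct edge count gives $|\partial \mathcal{P}_i| = |E(G)| - |E(H_i)| + (m - 1)$, so writing $N_i := |E(G)| - |E(H_i)|$ reduces the theorem to showing $\sts(H_1)/\sts(H_2) \geq \alpha$.

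\textbf{Telescoping via iterative edge deletion and the easy upper bound.} For each $i$, I would delete the $N_i$ edges of $E(G) \setminus E(H_i)$ from $G$ one at a time, producing a sequence $G = G_0, G_1, \ldots, G_{N_i} = H_i$ of connected subgraphs (they stay connected since $H_i \subseteq G_j$), and apply Lemma~\ref{lemEffectiveResistanceAndTrees} to obtain the telescoping identity
\[\sts(H_i) = \sts(G) \prod_{j=1}^{N_i} \bigl(1 - R_{e_j}^{G_{j-1}}\bigr).\]
The upper bound for $\sts(H_2)$ is then immediate from Lemma~\ref{lemERDegree}: since $G$ has no self-loops and at most one vertex $v_0$ of unbounded degree, each deleted edge has an endpoint of degree $\leq k_1$ in its current subgraph, so every factor is at most $1 - 1/k_1$, yielding $\sts(H_2) \leq \sts(G)(1 - 1/k_1)^{N_2}$.

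\textbf{Amortized lower bound (main obstacle).} The matching lower bound $\sts(H_1) \geq \sts(G)(2k_2)^{-N_1}$ is the technical heart of the argument. The natural tool is Lemma~\ref{lemERCycle}: if edge $e$ lies on a cycle of length $\ell$ in the current subgraph, then $1 - R_e \geq 1/\ell$, and in the original embedding every non-bridge edge borders a face of degree at most $k_2$, giving a short cycle. But as edges are removed, faces merge, and the shortest cycle through a later-deleted edge may grow without bound, so a naive per-step bound of $1/k_2$ fails. I would recover the desired $1/(2k_2)$ on geometric average via amortization: define a potential $\Phi$ on subgraphs of $G$ that penalizes large face degrees, then either choose the deletion order greedily (always deleting an edge on a currently small face) or prove existence of an order so that, for every step, $\log(\text{cycle length used at step }j) + \Phi(G_{j-1}) - \Phi(G_j) \leq \log(2k_2)$. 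Telescoping then gives the stated bound up to the total change $\Phi(G) - \Phi(H_1)$, which for an appropriate choice of $\Phi$ should be $O(m)$ and is absorbable into the $(m-1)$ corrections collected at the end. Identifying a $\Phi$ that makes the constant $2k_2$ work is the creative step and is what I expect to be the main difficulty; the factor of $2$ in $2k_2$ (versus the naive $k_2$) is presumably the slack needed to pay for this amortization.

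\textbf{Final arithmetic and the role of $\varepsilon$.} Combining the two bounds, taking logarithms, and substituting $N_i = |\partial \mathcal{P}_i| - (m-1)$, the inequality $\sts(H_1)/\sts(H_2) \geq \alpha$ reduces to
\[|\partial \mathcal{P}_2| \cdot \bigl(-\log(1 - 1/k_1)\bigr) \geq |\partial \mathcal{P}_1| \log(2k_2) + \log \alpha + (m-1) \cdot C,\]
for an explicit constant $C$ depending on $k_1, k_2$. The definition of $\lambda$ is precisely $\bigl(\log(2k_2) + \log\alpha\bigr)/\bigl(-\log(1-1/k_1)\bigr) + \varepsilon$, so premise (\ref{equPremiseBoundaryGap}) handles the leading terms on the two sides, while premise (\ref{equPremiseEpsilon}) turns the extra $\varepsilon$ in $\lambda$ into a slack of at least $\varepsilon \cdot \bigl(-\log(1-1/k_1)\bigr) \cdot |\partial \mathcal{P}_1| \geq (m-1) \cdot \bigl(-\log(1-1/k_1)\bigr)$ on the left, which absorbs the $(m-1)$ correction regardless of its sign. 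Apart from paragraph three, the entire argument is bookkeeping built directly on Lemmas~\ref{lemEffectiveResistanceAndTrees}, \ref{lemERCycle}, and \ref{lemERDegree}.
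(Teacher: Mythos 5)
Your outline follows the paper's proof exactly: reduce to connected subgraphs $H_i$ with $\sts(H_i) = \sts(\mathcal{P}_i)$, telescope $\sts(H_i)/\sts(G)$ as a product of $(1-R_e)$ factors over the $N_i = \abs{\partial\mathcal{P}_i} - (m-1)$ deleted edges, get the upper bound on $\sts(H_2)/\sts(G)$ from Lemma~\ref{lemERDegree}, and handle the $(m-1)$ corrections via the $\varepsilon$ slack. Your closing arithmetic is correct. However, you have explicitly left unfilled the single most substantial step --- the potential function $\Phi$ yielding the amortized lower bound $\sts(H_1)/\sts(G) \geq (2k_2)^{-N_1}$ --- and you correctly identify it as where all the work lies; the rest, as you say, is bookkeeping on Lemmas~\ref{lemEffectiveResistanceAndTrees}, \ref{lemERCycle}, and \ref{lemERDegree}.

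The paper's choice (Lemma~\ref{lemCountingTreesBoundsGeneral}) is a multiplicative pebble potential. Initially place one pebble on each face except the unbounded-degree face $f_0$, which gets $\deg(f_0)$ pebbles (and symmetrically $\deg(v_0)$ pebbles on $v_0$, though that only matters for runs with contractions). When an edge is deleted, merge the piles of the two faces it separates; let $P$ be the product of all pile sizes and take $\Phi = \log P$. The invariant is that every face's degree is at most $k_2$ times its pile size at all times: it holds initially by construction and is preserved under deletion (merged degree $d_1 + d_2 - 2 \leq k_2(x+y)$) and contraction (degrees only drop). If the deleted edge borders faces with piles $x \leq y$, Lemma~\ref{lemERCycle} gives $1 - R_e \geq 1/\deg(f_1) \geq 1/(k_2 x)$, while $P_i/P_{i-1} = (x+y)/(xy) \leq 2/x$, so $(1-R_e) \cdot P_{i-1}/P_i \geq 1/(2k_2)$ --- exactly your per-step amortization inequality, with the factor $2$ being the slack you anticipated. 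Two respects in which this is cleaner than your sketch suggests: no greedy or otherwise cleverly chosen deletion order is required, since the invariant holds unconditionally; and the total correction $\Phi(G_0) - \Phi(G_t)$ is not merely $O(m)$ but nonpositive, because the pile containing $f_0$ only grows while every other pile stays $\geq 1$, so $P_t \geq \deg(v_0)\deg(f_0) = P_0$. The potential thus contributes a free bonus rather than a term the $\varepsilon$ slack must absorb.
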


For example, in grid graphs, where $k_1 = k_2 = 4$, we have
$$\lambda(4, 4, 1, \varepsilon) = \frac{\log\left(\frac{1}{8}\right) - \log(1)}{\log\left(1 - \frac{1}{4}\right)} + \varepsilon \approx 7.23$$
(for small $\varepsilon$). Thus, our result shows that, for fixed $m$ and sufficiently large grids, an $m$-partition whose total boundary length is at least 7.23 times longer than that of another partition is less likely to be sampled.

To prove Theorem \ref{thmSpanningTreeDistribution}, we first establish upper and lower bounds for the geometric mean of the $p_i$ values on any run of the following sampling algorithm. \vspace{.1cm}

\begin{algorithm}[H]
	\caption{\label{algTreeSample}Samples uniformly from the set of all spanning trees of an input graph $G$.}
	$i \gets 0$\;
	$T \gets \{\}$\;
	\While{$G$ has at least 2 vertices}
	{
		$i \gets i + 1$\;
		$e_i \gets$ arbitrary edge in $G$\;\label{linChooseArbitraryEdge}
		$r_i \gets$ effective resistance of $e_i$\;
		\WithProbElse{r_i}
		{
			$p_i \gets r_i$\;\label{linPi1}
			$T \gets T \cup \{e_i\}$\;
			contract $e_i$ in $G$ (keeping multiple edges and self-loops)\;
		}
		{
			$p_i \gets 1 - r_i$\;\label{linPi2}
			delete $e_i$ from $G$\;
		}
	}
	\KwRet{$T$}\;
\end{algorithm}
\vspace{.15cm}

By Lemma \ref{lemEffectiveResistanceAndTrees}, at any point in the execution of Algorithm \ref{algTreeSample}, the probability of the computation path (which is the product of all $p_i$ values so far) is equal to the probability that a randomly chosen spanning tree of the original input graph $G$ includes all of the contracted edges and does not include any of the deleted edges. In particular, this implies that the order in which the edges are chosen in the successive executions of line \ref{linChooseArbitraryEdge} does not affect the probability of the given computation path. By the time the algorithm terminates, $T$ is guaranteed to be a uniform sample from the set of all spanning trees.

Our main technical lemma consists of two statements, where statement (\ref{itmLemmaGeneral}) is more general but statement (\ref{itmLemmaSpecific}) gives tighter bounds. While we only require (\ref{itmLemmaSpecific}) for our application, we additionally prove (\ref{itmLemmaGeneral}) because we believe it may be of independent interest.

\begin{lemma}\label{lemCountingTreesBoundsGeneral}
	For any positive integers $k_1$ and $k_2$, there exist constants $0 < c_1 < c_2 < 1$, such that, on any run of Algorithm \ref{algTreeSample} on a $(k_1, k_2)$-bounded graph, after any number of iterations $t$,
	$${c_1}^t \leq p_1 p_2 p_3 \dots p_t \leq {c_2}^{t}$$
	(where the $p_i$ are as defined on Lines \ref{linPi1} and \ref{linPi2}). Specifically, this holds with the following constants:
	\begin{enumerate}
		\item\label{itmLemmaGeneral} If the run involves both deletions and contractions,
		\begin{align*}
			c_1 &= \frac{1}{2\max\{k_1, k_2\}}, & c_2 &= \left(1 - \frac{1}{\max\{k_1, k_2\}}\right)^{\frac{1}{2(\min\{k_1, k_2\} - 1)}}.
		\end{align*}
		\item\label{itmLemmaSpecific} If the run involves only deletions,
		\begin{align*}
			c_1 &= \frac{1}{2k_2}, & c_2 &= 1 - \frac{1}{k_1}.
		\end{align*}
	\end{enumerate}
\end{lemma}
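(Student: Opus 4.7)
I will treat the two items separately. In both, the upper bound on $\prod p_i$ reduces to a pointwise application of Lemma \ref{lemERDegree}, while the lower bound requires an amortized analysis of how the planar structure evolves under the operations. I focus on item (\ref{itmLemmaSpecific}), which is what Theorem \ref{thmSpanningTreeDistribution} actually uses, and indicate the extension to item (\ref{itmLemmaGeneral}) at the end. For the upper bound $\prod p_i \leq (1 - 1/k_1)^t$ in item (\ref{itmLemmaSpecific}), I show $p_i \leq 1 - 1/k_1$ pointwise: since $e_i$ is not a self-loop and at most one vertex (the exceptional $v_0$) may have degree above $k_1$, some endpoint $a$ of $e_i$ has $\deg(a) \leq k_1$, so Lemma \ref{lemERDegree} gives $r_i \geq 1/k_1$ and hence $p_i = 1 - r_i \leq 1 - 1/k_1$. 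Deletions only decrease degrees, so this bound persists throughout the run.

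For the lower bound $\prod p_i \geq (2k_2)^{-t}$, Lemma \ref{lemERCycle} gives $p_i \geq 1/k_i$ where $k_i$ is the length of any cycle through $e_i$ in $G_{i-1}$; planarity lets me take $k_i$ to be the degree of one of the two faces adjacent to $e_i$, preferring the face not containing the original exceptional face $f_0$. Each current face $f \neq f_0^{(i-1)}$ is the union of some set $S(f)$ of original faces that have merged via prior deletions, and since each such merge decreases the combined boundary degree by exactly two, $\deg(f) \leq |S(f)|(k_2 - 2) + 2$. Hence $\log_2 k_i \leq \log_2 \tilde{s}_i + \log_2 k_2$, where $\tilde{s}_i = |S(f)|$ for the chosen face (the smaller of the two sizes in a Case A merge between two non-$f_0$ faces, and the non-$f_0$ size in a Case B merge with $f_0$). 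The combinatorial heart of the proof is then $\sum_i \log_2 \tilde{s}_i \leq t$. I view the $t$ deletions as building a binary merge forest over the $|F(G)| - 1$ non-$f_0$ original-face leaves, with Case B merges removing the absorbed block from the forest, and apply the potential $\phi(s) = s - 1 - \log_2 s$ with $\phi(1) = 0$. A Case A merge of blocks of sizes $s_1, s_2$ raises $\sum \phi$ by $1 + \log_2(s_1 s_2/(s_1 + s_2))$, which dominates $\log_2 \min(s_1, s_2)$ because $s_1 + s_2 \leq 2 \max(s_1, s_2)$, while a Case B merge lowers $\sum \phi$ by $\phi(s)$. Telescoping and rearranging yields $\sum_{\text{Case A}} \log_2 \min + \sum_{\text{Case B}} \log_2 s \leq t - t_B - \sum_{\text{remaining blocks}} \log_2 |B| \leq t$, and combining with the earlier step gives $\sum \log_2 k_i \leq t \log_2(2k_2)$, hence $\prod p_i \geq (2k_2)^{-t}$. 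The main obstacle here is the bookkeeping around Case B merges, which remove entire blocks from the forest rather than combining two of them; the potential $\phi$ is chosen precisely so that the removed block's contribution exactly absorbs the $\log_2 s$ cost of that merge.

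For item (\ref{itmLemmaGeneral}), the primal-dual correspondence between contractions in $G$ and deletions in the $(k_2, k_1)$-bounded dual $G^*$ reduces the all-contraction subcase to the dual of item (\ref{itmLemmaSpecific}) with the roles of $k_1$ and $k_2$ swapped. In the genuinely mixed case, both vertex degrees (from contractions) and face degrees (from deletions) can grow past their initial bounds, so Lemma \ref{lemERDegree} no longer delivers a pointwise upper bound on $p_i$. I would instead carry two coupled potentials — one on the vertex-merge forest, one on the face-merge forest — and use Lemma \ref{lemRayleigh} to control the effect of each operation on the other structure. The constants from this double amortization are exactly what produce the weaker $c_1 = 1/(2\max\{k_1, k_2\})$ and the fractional exponent $1/(2(\min\{k_1, k_2\} - 1))$ appearing in $c_2$ of item (\ref{itmLemmaGeneral}).
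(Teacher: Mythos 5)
Your treatment of item (\ref{itmLemmaSpecific}) is correct and is, at bottom, the same amortization argument as the paper's. The upper bound is identical (pointwise via Lemma~\ref{lemERDegree}, using that $e_i$ has two distinct endpoints and only $v_0$ may exceed degree $k_1$, and deletions never raise degrees). For the lower bound, you bound $p_i \geq 1/(k_2\tilde s_i)$ via Lemma~\ref{lemERCycle} and the invariant $\deg(f) \leq |S(f)|\,k_2$, then amortize $\prod \tilde s_i \leq 2^t$ with the additive potential $\phi(s) = s - 1 - \log_2 s$ over a face-merge forest, handling $f_0$ by deleting its block. The paper instead uses a multiplicative pebble potential $P_i = \prod(\text{pile sizes})$ with $f_0$ and $v_0$ seeded at $\deg(f_0)$ and $\deg(v_0)$ pebbles, shows $p_i\,P_{i-1}/P_i \geq 1/(2k_2)$ on each deletion, and telescopes using $P_t \geq P_0$. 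These are equivalent up to notation ($\Phi = \sum s - \#\text{blocks} - \log_2 P$): your bookkeeping around Case-B merges and remaining-block terms reproduces exactly what the pebble trick handles by seeding $f_0$'s pile. Both are valid; the paper's is slightly cleaner because the exceptional faces/vertices are absorbed without a case split in the telescoping.

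The issue is item (\ref{itmLemmaGeneral}), which your proposal does not establish. You correctly observe that the pure-contraction subcase dualizes item (\ref{itmLemmaSpecific}), but for the genuinely mixed case you only gesture at ``two coupled potentials'' and an appeal to Lemma~\ref{lemRayleigh}. Two points where this sketch diverges from, and falls short of, what actually works. First, for the lower bound the paper does \emph{not} need a delicate coupling between a vertex potential and a face potential: it runs a single pebble potential $P_i$ over \emph{both} vertices and faces, with $P_0 = \deg(v_0)\deg(f_0)$, and shows $p_i\,P_{i-1}/P_i \geq 1/(2k_2)$ on deletions and $\geq 1/(2k_1)$ on contractions by the exact dual argument, with no interaction term to control. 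Second, and more seriously, your sketch says nothing concrete about the mixed-case \emph{upper} bound, which is the part that requires a genuinely new idea. There, the pointwise bound $p_i \leq 1 - 1/k_2$ for contractions (via Lemma~\ref{lemERCycle}) can fail because prior deletions inflate face degrees. The paper's fix is combinatorial, not a potential argument: it chooses a bounding face $f(e)$ with $\deg(f(e)) \leq k_2$ for each contracted edge, splits contractions into $C_1$ (those whose $f(e)$ contains no deleted edge) and $C_2$ (the rest), reorders WLOG so $D$, then $C_1$, then $C_2$ are processed (which is legitimate because Algorithm~\ref{algTreeSample} is order-independent on a fixed delete/contract pattern), gets pointwise bounds on $D$ and $C_1$ only, and then proves a counting claim that $|D| + |C_1|$ is at least a constant fraction of $t$ because each deleted edge can ``spoil'' at most $2(k_2-1)$ faces. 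Without this (or an argument of comparable specificity), item (\ref{itmLemmaGeneral})'s upper bound is unproven, and the lemma as stated is not established.
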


We remark that it is \emph{not} true that each $p_i$ is always between $c_1$ and $c_2$. It is not too hard to see that sometimes we may have $r_i = 0$ or $r_i = 1$, in which case $p_i = 1$. Also, after an adversarial sequence of contractions, it is possible to have $r_i$ arbitrarily close to 0 but not equal to 0, and after an adversarial sequence of deletions, it is possible to have $r_i$ arbitrarily close to 1 but not equal to 1. Thus, $p_i$ may be arbitrarily close to 0 as well. In these scenarios, however, it can take many iterations to get to such a case, so we must amortize these bad factors over the iterations where $p_i$ is less extreme.

\begin{proof}[Proof of Lemma~\ref{lemCountingTreesBoundsGeneral}]
	Let $v_0$ and $f_0$ be as in the definition of $G$ being $(k_1, k_2)$-bounded. We begin by proving the upper bounds. Let $D$ be the subset of the first $t$ edges that are ultimately deleted, and let $C$ be the subset of the first $t$ edges that are ultimately contracted. For every edge $e \in C$, choose a face $f(e) \in V(G^*)$ such that $e$ is on the boundary of $f(e)$ and $\deg(f(e)) \leq k_2$. Note that this is always possible since the two faces $e$ bounds cannot both be $f_0$, for this would imply $G^*$ has a self-loop, contradicting the definition of $G$ being $(k_1, k_2)$-bounded. Partition $C$ into $C_1 \cup C_2$, where $C_1$ consists of the edges $e$ such that $f(e)$ does not contain any edges in $D$. Without loss of generality, we assume Algorithm \ref{algTreeSample} first processes the edges in $D$, then in $C_1$, then in $C_2$.
	
	By Lemma \ref{lemERDegree}, each edge in $D$ has effective resistance at least $\frac{1}{k_1}$ when it is deleted from $G$. Therefore, on a deletion iteration $i$, we have
	$$p_i = 1 - r_i \leq 1 - \frac{1}{k_1}.$$
	This immediately implies the upper bound in statement (\ref{itmLemmaSpecific}).
	
	To prove the upper bound in statement (\ref{itmLemmaGeneral}), we must consider the contractions as well. By Lemma \ref{lemERCycle} and the way we chose $C_1$, each edge in $C_1$ has effective resistance at most $1 - \frac{1}{k_2}$ when contracted in $G$, so on an iteration $i$ that contracts an edge from $C_1$,
	$$p_i = r_i \leq 1 - \frac{1}{k_2}.$$
	
	We next claim that $\abs{D} + \abs{C_1} \geq \frac{t}{2k_2 - 1}$. Supposing for contradiction that this were not the case, we must have $\abs{D} < \frac{t}{2k_2 - 1}$ and
	$$\abs{C_2} = t - (\abs{D} + \abs{C_1}) > t - \frac{t}{2k_1 - 1} = \frac{2(k_2 - 1)t}{2k_2 - 1},$$
	so it follows that
	$$\abs{C_2} > 2(k_2 - 1)\abs{D}.$$
	This contradicts the way $C_2$ was defined, since each edge in $D$ can be contained in $f(e)$ for at most $2(k_2 - 1)$ edges $e \in C_2$ (in the extreme case, the edge in $D$ lies between two faces, each containing $k_2 - 1$ other edges $e$).
	
	Putting these bounds together, we have
	\begin{align*}
		\prod_{1 \leq i \leq t} p_i &= \left(\prod_{1 \leq i \leq t,\ e_i \in D} p_i\right) \left(\prod_{1 \leq i \leq t,\ e_i \in C_1} p_i\right) \left(\prod_{1 \leq i \leq t,\ e_i \in C_2} p_i\right)\\
		&\leq \left(1 - \frac{1}{k_1}\right)^{\abs{D}} \left(1 - \frac{1}{k_2}\right)^{\abs{C_1}} (1)\\
		&\leq \left(1 - \frac{1}{\max\{k_1, k_2\}}\right)^{\abs{D} + \abs{C_1}}\\
		&\leq \left(1 - \frac{1}{\max\{k_1, k_2\}}\right)^{\frac{t}{2(k_2 - 1)}}\\
		&= \left(\left(1 - \frac{1}{\max\{k_1, k_2\}}\right)^{\frac{1}{2(k_2 - 1)}}\right)^t.
	\end{align*}
	Note that we could make the dual argument, first processing the contractions, in which case we would be left with the same upper bound, except with a $k_1$ in the exponent instead of $k_2$. The upper bound in statement (\ref{itmLemmaGeneral}) follows.
	
	To prove the lower bounds, we define a potential function on the graph as follows. Initially, place one pebble on every vertex and face of $G$ except $v_0$ and $f_0$, which receive piles of $\deg(v_0)$ and $\deg(f_0)$ pebbles, respectively. Throughout the execution of Algorithm \ref{algTreeSample}, whenever an edge is deleted from $G$ (and its dual edge is contracted in $G^*$), combine the piles on the faces on either side of the deleted edge into a new pile on the new face, and whenever an edge is contracted in $G$ (and its dual edge is deleted from $G^*$), combine the piles on the endpoints of the contracted edge into a new pile on the new vertex. After each iteration $0 \leq i \leq t$, let $P_i$ denote the product of the numbers of pebbles in each pile. When $i = 0$, before any edges have been deleted or contracted, $P_0 = \deg(v_0) \deg(f_0)$.
	
	We claim that, after any deletion iteration $1 \leq i \leq t$,
	\begin{equation}\label{equPotentialConstantDeletion}
		p_i \frac{P_{i - 1}}{P_i} \geq \frac{1}{2k_2},
	\end{equation}
	and after any contraction iteration $1 \leq i \leq t$,
	\begin{equation}\label{equPotentialConstantContraction}
		p_i \frac{P_{i - 1}}{P_i} \geq \frac{1}{2k_1}.
	\end{equation}
	The proofs of these two statements are completely dual, so we will only discuss the deletion case.
	
	Suppose that the deleted edge on round $i$ lies between faces $f_1$ and $f_2$. Suppose there are $x$ pebbles on $f_1$ and $y$ pebbles on $f_2$, and, without loss of generality, assume $x \leq y$. Let $z$ be the product of the number of pebbles in all of the other piles before the edge is deleted from $G$. By Lemma \ref{lemERCycle}, we know $p_i = 1 - r_i \geq \frac{1}{\deg(f_1)}$. Observe that, before any edges are deleted, the degree of every face is at most $k_2$ times the number of pebbles on that face. This is because, initially, either the degree is at most $k_2$ or, in the case of $f_0$, the number of pebbles is equal to the degree. It is not too hard to see that this property is preserved under contracting edges (which can only lower degrees) and deleting edges (which simultaneously combines face degrees and pebble pile sizes). Thus, this property holds of $f_1$ before deletion on round $i$, i.e., $\deg(f_1) \leq k_2x$. Therefore,
	\begin{align*}
		p_i \frac{P_{i - 1}}{P_i} &\geq \frac{1}{\deg(f_1)} \frac{P_{i - 1}}{P_i}\\
		&\geq \frac{1}{k_2x} \cdot \frac{P_{i - 1}}{P_i}\\
		&= \frac{1}{k_2x} \cdot \frac{xyz}{(x + y)z}\\
		&\geq \frac{1}{k_2x} \cdot \frac{xyz}{(2y)z} \snc{x \leq y}\\
		&= \frac{1}{2k_2},
	\end{align*}
	as desired.
	
	If $I_D$ is the set of deletion iterations and $I_C$ is the set of contraction iterations, it follows from Equations (\ref{equPotentialConstantDeletion}) and (\ref{equPotentialConstantContraction}) that
	\begin{align*}
		p_1 p_2 \dots p_{t} &\geq \prod_{i \in I_D} \left(\frac{1}{2k_2} \cdot \frac{P_i}{P_{i - 1}}\right) \prod_{i \in I_C} \left(\frac{1}{2k_1} \cdot \frac{P_i}{P_{i - 1}}\right)\\
		&= \frac{P_1}{P_{0}} \cdot \frac{P_2}{P_{1}} \cdot \dots \cdot \frac{P_t}{P_{t - 1}} \prod_{i \in I_D} \left(\frac{1}{2k_2}\right) \prod_{i \in I_C} \left(\frac{1}{2k_1}\right)\\
		&= \frac{P_{t}}{P_0} \left(\frac{1}{2k_2}\right)^{\abs{I_D}} \left(\frac{1}{2k_1}\right)^{\abs{I_C}}\\
		&\geq \left(\frac{1}{2k_2}\right)^{\abs{I_D}} \left(\frac{1}{2k_1}\right)^{\abs{I_C}},
	\end{align*}
	where the final inequality holds since the initial piles of pebbles placed on $v_0$ and $f_0$ can only grow and cannot merge, so $P_t \geq \deg(v_0) \deg(f_0) = P_0$. The lower bound for statement (\ref{itmLemmaGeneral}) follows since $\abs{I_D} + \abs{I_C} = t$, while the lower bound for statement (\ref{itmLemmaSpecific}) follows by specializing $\abs{I_D} = t$ and $\abs{I_C} = 0$.
\end{proof}

We are now ready to prove our main result and its corollary.

\begin{proof}[Proof of Theorem \ref{thmSpanningTreeDistribution}]
	Letting $c_1$ and $c_2$ be as in Lemma \ref{lemCountingTreesBoundsGeneral} (\ref{itmLemmaSpecific}), observe that
	$$\lambda = \frac{\log(c_1) - \log(\alpha)}{\log(c_2)} + \varepsilon = \frac{\log(c_1)}{\log(c_2)} + \varepsilon - \log_{c_2}(\alpha) \geq \frac{\log(c_1)}{\log(c_2)} + \varepsilon - \frac{\log_{c_2}(\alpha)}{\abs{\partial \mathcal{P}_1}}.$$
	Rearranging, we have
	$$\lambda - \varepsilon + \frac{\log_{c_2}(\alpha)}{\abs{\partial \mathcal{P}_1}}\geq \frac{\log(c_1)}{\log(c_2)},$$
	so
	\begin{equation}\label{equLambdaProperty}
	    \alpha c_2^{\abs{\partial \mathcal{P}_1}(\lambda - \varepsilon)} \leq c_2^{\abs{\partial \mathcal{P}_1} \cdot (\log(c_1)/\log(c_2))}.
	\end{equation}
	
	For each $i \in \{1, 2\}$, let $T_i$ be a spanning tree of $G / \mathcal{P}_i$, and let $S_i \subseteq \partial \mathcal{P}_i$ be a set of edges of size $\abs{S} = \abs{\partial \mathcal{P}_i} - (m - 1)$ obtained by removing from $\partial \mathcal{P}_i$ one cut edge between districts $D, D' \in \mathcal{P}_i$ for every pair of adjacent districts $\{D, D'\} \in E(T_i)$, as shown in Figure \ref{figAddTreeEdges}. The probability of the partial computation path of Algorithm \ref{algTreeSample} which deletes all edges in $S_i$ is the probability of drawing from the uniform distribution a tree with no edges in $S_i$. By construction, the number of such trees is equal to the spanning tree score of $\mathcal{P}_i$. Therefore, applying Lemma \ref{lemCountingTreesBoundsGeneral}, there exist $0 < c_1 < c_2 < 1$ such that, for each $i \in \{1, 2\}$,
	\begin{equation}\label{equSpanningTreeScoreUpperAndLowerBounds}
		c_1^{\abs{\partial \mathcal{P}_i} - m + 1} \leq \frac{\sts(\mathcal{P}_i)}{\sts(G)} \leq c_2^{\abs{\partial \mathcal{P}_i} - m + 1}.
	\end{equation}
	
	\ipnc{.7}{AddTreeEdges}{\label{figAddTreeEdges}Illustration of the proof of Theorem \ref{thmSpanningTreeDistribution}, where we approximately compute the spanning tree score of the partition on the left by linking the districts together with additional edges. The $S_i$ set consists of all the dashed red edges in the graph on the right. After removing these edges, the number of spanning trees of the graph on the right is precisely the spanning tree score of the original partition (which in this case is 192).}
	
	Let $\beta$ be the normalization constant such that, for any $m$-partition $\mathcal{P}$ of $G$, $\Pr_{\dist}[\mathcal{P}] = \frac{\beta}{\sts(G)}\sts(\mathcal{P})$. Then, whenever $\mathcal{P}_1$ and $\mathcal{P}_2$ satisfy Equations (\ref{equPremiseBoundaryGap}) and (\ref{equPremiseEpsilon}), we derive that
	\begin{align*}
		\Pr_{\dist}[\mathcal{P}_1] &= \beta\frac{\sts(\mathcal{P}_1)}{\sts(G)}\\
		&\geq \beta c_1^{\abs{\partial \mathcal{P}_1} - m + 1} \stext{from Equation (\ref{equSpanningTreeScoreUpperAndLowerBounds})}\\
		&\geq \beta c_1^{\abs{\partial \mathcal{P}_1}}\\
		&= \beta c_2^{\abs{\partial \mathcal{P}_1} \cdot (\log(c_1)/\log(c_2))}\\
		&\geq \alpha\beta c_2^{\lambda\abs{\partial \mathcal{P}_1} - \varepsilon \abs{\partial \mathcal{P}_1}} \stext{from Equation (\ref{equLambdaProperty})}\\
		&\geq \alpha\beta c_2^{\abs{\partial \mathcal{P}_2} - \varepsilon \abs{\partial \mathcal{P}_1}} \stext{from Equation (\ref{equPremiseBoundaryGap})}\\
		&\geq \alpha\beta c_2^{\abs{\partial \mathcal{P}_2} - m + 1} \stext{from Equation (\ref{equPremiseEpsilon})}\\
		&\geq \alpha\beta \frac{\sts(\mathcal{P}_2)}{\sts(G)} \stext{from Equation (\ref{equSpanningTreeScoreUpperAndLowerBounds})}\\
		&= \alpha \Pr_{\dist}[\mathcal{P}_2]. \qedhere
	\end{align*}
\end{proof}

As we mentioned in Section~\ref{subMainResult}, Theorem~\ref{thmSpanningTreeDistribution} is more easily understood through a corollary that shows an inverse exponential relationship between the ratio of probabilities (under the spanning tree distribution) and ratio of discrete perimeters of any two $m$-partitions. Here we state and prove a more formal version of the corollary.

\begin{corollary}
\label{corMain}
On any class of graphs that are $(k_1, k_2)$-bounded for constants $k_1,k_2\in\zz_{\geq 1}$, and for any pair of $m$-partitions $\mathcal{P}_1$ and $\mathcal{P}_2$,
    $$\frac{\Pr_{\dist}[\mathcal{P}_1]}{\Pr_{\dist}[\mathcal{P}_2]} \geq 2^{\Theta\left(\frac{\abs{\partial\mathcal{P}_2}}{\abs{\partial\mathcal{P}_1}}\right)}.$$
\end{corollary}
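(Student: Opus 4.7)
The plan is to derive Corollary \ref{corMain} as a direct reparametrization of Theorem \ref{thmSpanningTreeDistribution}: rather than defining $\lambda$ as a function of $\alpha$ (as in Equation (\ref{equLambdaDefinition})), I will invert the relationship and express $\alpha$ as an exponential function of the ratio $R := \abs{\partial\mathcal{P}_2}/\abs{\partial\mathcal{P}_1}$. The corollary will then follow by reading off the conclusion of the theorem.

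First I will fix $\varepsilon$ to be any positive constant, say $\varepsilon = 1$. The side condition in Equation (\ref{equPremiseEpsilon}) then reduces to $\abs{\partial\mathcal{P}_1} \geq m - 1$, which is automatic: the quotient $G/\mathcal{P}_1$ is a connected multigraph on $m$ vertices, so it contains at least $m - 1$ edges, each of which corresponds to a cut edge of $\mathcal{P}_1$. Next I set $\lambda := R$ so that Equation (\ref{equPremiseBoundaryGap}) holds. Solving Equation (\ref{equLambdaDefinition}) for $\alpha$ when $\lambda = R$ yields
$$\alpha = \frac{1}{2k_2}\left(1 - \frac{1}{k_1}\right)^{-(R - \varepsilon)} = \frac{1}{2k_2}\gamma^{R-\varepsilon},$$
where $\gamma := 1/(1 - 1/k_1) > 1$ depends only on $k_1$. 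Taking base-two logarithms, $\log_2 \alpha = cR - D$, where $c := \log_2 \gamma$ and $D := \log_2(2k_2) + \varepsilon c$ are positive constants depending only on $k_1$ and $k_2$.

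For $R$ above the (constant) threshold at which $\alpha \geq 1$, Theorem \ref{thmSpanningTreeDistribution} applies directly and gives $\Pr_{\dist}[\mathcal{P}_1]/\Pr_{\dist}[\mathcal{P}_2] \geq \alpha = 2^{cR - D}$, which is exactly the desired $2^{\Theta(R)}$ lower bound under the natural reading of $\Theta$ as an asymptotic statement in $R$.

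The main obstacle, as I see it, is really bookkeeping rather than new mathematical content, since everything substantive is already packed into Theorem \ref{thmSpanningTreeDistribution} and Lemma \ref{lemCountingTreesBoundsGeneral}. The only things to verify carefully are that the side condition on $\abs{\partial\mathcal{P}_1}$ is genuinely automatic (it is, by the connectivity of $G/\mathcal{P}_1$), and that the small-$R$ regime (where $\alpha < 1$ and the theorem does not formally apply) is handled by the asymptotic interpretation of $\Theta$ rather than requiring a separate case analysis.
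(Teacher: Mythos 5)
Your proof is correct and follows essentially the same route as the paper: set $\varepsilon = 1$, set $\lambda$ equal to the ratio $\abs{\partial\mathcal{P}_2}/\abs{\partial\mathcal{P}_1}$, solve Equation~(\ref{equLambdaDefinition}) for $\alpha$, note that $\alpha$ is then an increasing exponential in the ratio, and invoke Theorem~\ref{thmSpanningTreeDistribution} (noting $\alpha \geq 1$ once the ratio exceeds a constant, with the small-ratio regime absorbed by the asymptotic $\Theta$). The paper writes $\alpha$ as $\frac{1}{2k_2}(1 + \frac{1}{k_1-1})^{R-1}$, which is just your $\frac{1}{2k_2}(1-\frac{1}{k_1})^{-(R-\varepsilon)}$ with $\varepsilon=1$ rewritten, and your justification that $\abs{\partial\mathcal{P}_1} \geq m-1$ via connectivity of $G/\mathcal{P}_1$ is the same argument the paper gestures at.
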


\begin{proof}
    Given any pair of $m$-partitions $\mathcal{P}_1$ and $\mathcal{P}_2$ of a $(k_1, k_2)$-bounded graph $G$ (for any $m$), let
    $$\alpha := \frac{1}{2k_2}\left(1 + \frac{1}{k_1 - 1}\right)^{\left(\frac{\abs{\partial \mathcal{P}_2}}{\abs{\partial \mathcal{P}_1}} - 1\right)}$$
    and $\varepsilon := 1$. Observe that we may equivalently write
    $$\alpha = \frac{\frac{1}{2k_2}}{\left(1 - \frac{1}{k_1}\right)^{\left(\frac{\abs{\partial \mathcal{P}_2}}{\abs{\partial \mathcal{P}_1}} - \varepsilon\right)}},$$
    so
    $$\log(\alpha) = \log\left(\frac{1}{2k_2}\right) - \left(\frac{\abs{\partial \mathcal{P}_2}}{\abs{\partial \mathcal{P}_1}} - \varepsilon\right)\log\left(1 - \frac{1}{k_1}\right).$$
    Plugging this into Equation (\ref{equLambdaDefinition}), we have
    $$\lambda(k_1, k_2, \alpha, \varepsilon) = \frac{\log\left(\frac{1}{2k_2}\right) - \log\left(\frac{1}{2k_2}\right) + \left(\frac{\abs{\partial \mathcal{P}_2}}{\abs{\partial \mathcal{P}_1}} - \varepsilon\right)\log\left(1 - \frac{1}{k_1}\right)}{\log\left(1 - \frac{1}{k_1}\right)} + \varepsilon = \frac{\abs{\partial \mathcal{P}_2}}{\abs{\partial \mathcal{P}_1}},$$
    so Equation (\ref{equPremiseBoundaryGap}) holds. Furthermore, Equation (\ref{equPremiseEpsilon}) must always hold for $\varepsilon = 1$, for otherwise $G$ would have to be disconnected. As long as $\frac{\abs{\partial \mathcal{P}_2}}{\abs{\partial \mathcal{P}_1}}$ is sufficiently large, we have $\alpha \geq 1$ as well, and thus we meet all of the hypotheses of Theorem \ref{thmSpanningTreeDistribution}, concluding that
    $$\frac{\Pr_{\dist}[\mathcal{P}_1]}{\Pr_{\dist}[\mathcal{P}_2]} \geq \alpha = \frac{1}{2k_2}\left(1 + \frac{1}{k_1 - 1}\right)^{\left(\frac{\abs{\partial \mathcal{P}_2}}{\abs{\partial \mathcal{P}_1}} - 1\right)}.$$
    This shows that
    \begin{equation*}
        \frac{\Pr_{\dist}[\mathcal{P}_1]}{\Pr_{\dist}[\mathcal{P}_2]} \geq 2^{\Theta\left(\frac{\abs{\partial\mathcal{P}_2}}{\abs{\partial\mathcal{P}_1}}\right)}.\qedhere
    \end{equation*}
\end{proof}

Finally, we remark that the constant $\lambda$ from Theorem \ref{thmSpanningTreeDistribution} must have some dependence on $k_1$ and $k_2$, and so there is not a more general statement that applies to, say, all planar graphs. The following two theorems demonstrate that assuming fixed $k_1$ and $k_2$ is necessary, even when we impose the additional realistic restriction that graphs do not have multiple edges between any pair of vertices.

\newpage

\begin{theorem}\label{thmCounterexample}
	There exists an infinite family of graphs $G_1, G_2, G_3, \dots$ such that:
	\begin{itemize}
		\item For any positive integer $n$, there exists $k_2$ such that $G_n$ is $(4, k_2)$-bounded and does not have multiple edges between any pair of vertices.
		\item There is a sequence of 2-partitions $\mathcal{P}_1^n$ and $\mathcal{P}_2^n$ of $G_n$ such that
		$$\lim_{n \to \infty} \frac{\abs{\partial \mathcal{P}_1^n}}{\abs{\partial \mathcal{P}_2^n}} = \lim_{n \to \infty} \frac{\Pr_{\dist}[\mathcal{P}_1^n]}{\Pr_{\dist}[\mathcal{P}_2^n]} = 0.$$
	\end{itemize}
\end{theorem}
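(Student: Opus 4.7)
The plan is to exploit the freedom that $k_2$ may depend on $n$: by making one face of $G_n$ enormous, I construct a partition $\mathcal{P}_1^n$ whose few cut-edges split the graph into two essentially tree-like districts (small $\sts$), while a second partition $\mathcal{P}_2^n$ with many more cut-edges leaves both districts cycle-rich (large $\sts$). Concretely, for each $n$ I set $R:=n$ and $M:=4^n$, and let $G_n$ consist of an outer cycle on $a_1,\ldots,a_M$, an inner cycle on $b_1,\ldots,b_M$, together with $2R$ rungs $a_i b_i$ for $i\in\{1,\ldots,R\}\cup\{M/2+1,\ldots,M/2+R\}$. Every vertex has degree at most $3$, there are no self-loops, bridges, or multiple edges, and the faces are: two length-$M$ cycle faces, two ``large annular'' faces of degree $M-2R+4$, and $2(R-1)$ small 4-cycles sitting between consecutive rungs within each cluster. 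Taking $k_2:=M$ and $f_0$ to be any of the big faces shows that $G_n$ is $(4,M)$-bounded.

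Next, I take $\mathcal{P}_1^n$ to be the ``top-vs-bottom'' partition (district~$1=\{a_1,\ldots,a_{M/2},b_1,\ldots,b_{M/2}\}$) and $\mathcal{P}_2^n$ to be the ``outer-vs-inner'' partition (district~$1=\{a_1,\ldots,a_M\}$). Both partitions have balanced (size $M$) and connected districts: in $\mathcal{P}_1^n$ the two half-cycles of each district are joined via the rung cluster the district contains, and in $\mathcal{P}_2^n$ each district is already a cycle. The cut-edges of $\mathcal{P}_1^n$ are exactly the four cycle edges straddling positions $M/2$ and $M$ (no rung is cut, because both endpoints of each rung lie at the same position), so $\abs{\partial\mathcal{P}_1^n}=4$, whereas $\mathcal{P}_2^n$ cuts every rung and no cycle edge, so $\abs{\partial\mathcal{P}_2^n}=2n$. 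Hence $\abs{\partial\mathcal{P}_1^n}/\abs{\partial\mathcal{P}_2^n}=2/n\to 0$.

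The heart of the argument is the spanning-tree comparison. Each district of $\mathcal{P}_2^n$ is a cycle $C_M$, giving $\sts(\mathcal{P}_2^n)=M^2=16^n$. Each district of $\mathcal{P}_1^n$, by contrast, is a $2\times R$ ladder segment (the $R$ rungs of one cluster together with the intervening arcs) attached to two long pendant path ``antennae'' running out to the four cycle cut-edges. Every antenna edge is a bridge of the induced subgraph, so every spanning tree must contain it; contracting all antenna edges shows that $\sts$ of each district equals $\sts(L_R)$, where $L_R$ is the $2\times R$ ladder. A standard transfer-matrix (Chebyshev-polynomial) computation yields $\sts(L_R)=\Theta\!\bigl((2+\sqrt{3})^R\bigr)$, whence
$$\frac{\Pr_{\dist}[\mathcal{P}_1^n]}{\Pr_{\dist}[\mathcal{P}_2^n]}=\frac{\sts(\mathcal{P}_1^n)}{\sts(\mathcal{P}_2^n)}=\Theta\!\left(\left(\frac{7+4\sqrt{3}}{16}\right)^{\!n}\right)\to 0,$$
since $7+4\sqrt{3}\approx 13.93<16$, completing the verification.

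The main obstacle is establishing a sufficiently sharp exponential bound $\sts(L_R)\leq C^R$ with $C<4$. The classical value $C=2+\sqrt{3}\approx 3.73$ is tight but slightly delicate; alternatively, because $k_2$ may depend on $n$, one can enlarge $M$ to accommodate any weaker exponential bound on $\sts(L_R)$ (the graph remains $(4,M)$-bounded) and still obtain the ratio $\to 0$. All remaining checks---planarity, connectivity, balanced district sizes, absence of self-loops, bridges, and multi-edges, and the cut-edge counts---are routine.
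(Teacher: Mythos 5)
Your proposal is correct, and it uses a different counterexample construction from the paper's, though the underlying strategy (exploit a face of unbounded degree to decouple cut-edge counts from spanning-tree counts) is the same. The paper builds $G_n$ around a large cycle attached to nested $2\times k$ grids, with $\abs{\partial\mathcal{P}_1^n}=3$ and $\abs{\partial\mathcal{P}_2^n}=2n$; the spanning-tree scores share a common factor $A((n-1)A(n)^2/2)$ which cancels, leaving the ratio $\frac{A(n)^2}{(n-1)A(n)^2+2}\leq\frac{1}{n-1}$, a merely polynomial decay. Your double-cycle gadget with two diametrically opposite ladder clusters is cleaner in one respect: the two districts of each partition are globally symmetric, so both spanning-tree scores factor completely and the ratio $\Theta\bigl((\tfrac{7+4\sqrt{3}}{16})^n\bigr)$ decays exponentially with no need for a common cancelling factor. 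It also gives every vertex degree at most $3$, slightly stronger than the required $\leq 4$. Your observation that the magnitude $M=4^n$ can be inflated if one wants to avoid invoking the sharp constant $2+\sqrt{3}$ is a nice robustness remark, and your checks on planarity, connectivity, balance, absence of self-loops/bridges/multi-edges, and cut-edge counts all go through; in particular, every antenna edge is indeed a bridge of its district's induced subgraph, so contracting them to reduce to the $2\times R$ ladder is valid.

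One small presentational caveat: your construction has four large faces (two of degree $M$ and two of degree $M-2R+4$), not one, so the bound $k_2=M$ is being used to cover all of them rather than designating a single exceptional $f_0$. This is perfectly consistent with the definition of $(k_1,k_2)$-boundedness (the exceptional $f_0$ is an allowance, not a requirement), but it is worth stating explicitly, since the paper's own construction only has two ``bad'' faces and draws attention to that point.
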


\ipnc{1}{Counterexample4}{\label{figCounterexample}A family of graphs $G_n$ (for even $n$) that are \emph{almost} $(4, 4)$-bounded (there are two bad faces $f_0$ instead of only one) for which Theorem \ref{thmSpanningTreeDistribution} fails. The instance shown is $G_2$.}

\begin{proof}
    Let $G_1, G_2, G_3, \dots$ be the family of graphs illustrated in Figure \ref{figCounterexample}, where
    $$A(n) := \threecases{\txt{if } n = 0}{0}{\txt{if } n = 1}{1}{\txt{for } n \geq 2}{4A(n - 1) - A(n - 2)}.$$
    Note that $A(n)$ is the number of spanning trees of a $2 \times n$ grid graph.\footnote{See: \url{http://oeis.org/A001353}} Let $\mathcal{P}_1^n$ be the 2-partition defined by the curved light blue line, and let $\mathcal{P}_2^n$ be the 2-partition defined by the horizontal dark red has line. Then
    $$\lim_{n \to \infty}\frac{\abs{\partial \mathcal{P}_1^n}}{\abs{\partial \mathcal{P}_2^n}} = \lim_{n \to \infty}\frac{3}{2n} = 0.$$
    Furthermore, it is easy to verify that $\mathcal{P}_1^n$ has a spanning tree score of $A(n)^2 A((n - 1)A(n)^2/2)$, whereas $\mathcal{P}_2^n$ has a spanning tree score of $((n - 1)A(n)^2 + 2) A((n - 1)A(n)^2/2)$. Therefore,
    $$\frac{\Pr_{\dist}[\mathcal{P}_1^n]}{\Pr_{\dist}[\mathcal{P}_2^n]} = \frac{\sts(\mathcal{P}_1^n)}{\sts(\mathcal{P}_2^n)} = \frac{A(n)^2}{(n - 1)A(n)^2 + 2} \leq \frac{1}{n - 1},$$
    which vanishes as $n \to \infty$.
\end{proof}

\begin{theorem}\label{thmDualCounterexample}
	There exists an infinite family of graphs $G_1, G_2, G_3, \dots$ such that:
	\begin{itemize}
		\item For any positive integer $n$, there exists $k_1$ such that $G_n$ is $(k_1, 7)$-bounded and does not have multiple edges between any pair of vertices.
		\item There is a sequence of 2-partitions $\mathcal{P}_1^n$ and $\mathcal{P}_2^n$ of $G_n$ such that
		$$\lim_{n \to \infty} \frac{\abs{\partial \mathcal{P}_1^n}}{\abs{\partial \mathcal{P}_2^n}} = \lim_{n \to \infty} \frac{\Pr_{\dist}[\mathcal{P}_1^n]}{\Pr_{\dist}[\mathcal{P}_2^n]} = 0.$$
	\end{itemize}
\end{theorem}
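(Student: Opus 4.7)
The plan is to construct an explicit family of planar graphs $G_n$ that mirrors, in spirit, the family of Theorem~\ref{thmCounterexample}, but with the roles of vertex and face interchanged. Where Theorem~\ref{thmCounterexample} exploited a face $f_0$ of unbounded degree to obtain a partition with few cut edges but a very small spanning tree score, Theorem~\ref{thmDualCounterexample} should exploit a vertex $v_0$ of unbounded degree to obtain a partition with many cut edges but a very large spanning tree score. Inspecting the proof of Theorem~\ref{thmSpanningTreeDistribution} pinpoints why this should work: $k_1$ enters only through the upper bound $c_2 = 1 - 1/k_1$ in Lemma~\ref{lemCountingTreesBoundsGeneral}~(\ref{itmLemmaSpecific}), inherited from Lemma~\ref{lemERDegree}, and allowing large vertex degrees lets the effective resistance of a deleted edge approach $0$, making $p_i \to 1$ and destroying the upper bound on $\sts(\mathcal{P}_2)$.

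I would construct $G_n$ from two identical ``bulk'' components $L$ and $R$ (e.g.\ ladder or grid subgraphs) together with a central hub vertex $v_0$ of degree $\Theta(n)$ joined to many vertices of $L \cup R$ by triangulated spokes, arranging the planar embedding so that every face except one distinguished outer face has degree at most $7$ and no two vertices are joined by multiple edges. Concretely, I picture $v_0$ sitting between $L$ and $R$ with a ``fan'' of triangular faces surrounding it, in analogy with how the bad face $f_0$ interleaves with the three-edge cut of Theorem~\ref{thmCounterexample}. Then $\mathcal{P}_2^n$ is defined as the balanced $2$-partition that puts $v_0$ on one side with roughly half of its spokes becoming cut edges\emdash giving $\abs{\partial \mathcal{P}_2^n} = \Theta(n)$ but leaving the district containing $v_0$ with essentially all of its high-connectivity structure intact\emdash while $\mathcal{P}_1^n$ is the balanced $2$-partition that cleanly separates $L$ from $R$ using only a constant number of cut edges, with $v_0$ assigned entirely to one side.

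The analysis step is to compute or tightly bound $\sts(\mathcal{P}_1^n)$ and $\sts(\mathcal{P}_2^n)$ in closed form, either by the matrix-tree theorem or by grid-graph recurrences analogous to $A(n) = 4A(n - 1) - A(n - 2)$ from Theorem~\ref{thmCounterexample}, augmented by the contribution of $v_0$. The intuition is that many of $v_0$'s spokes are essentially redundant for spanning trees, so cutting half of them barely lowers $\sts$ of the district containing $v_0$, whereas $\mathcal{P}_1^n$'s clean separation collapses the entire star contribution from one of the two districts. A target bound is $\sts(\mathcal{P}_1^n)/\sts(\mathcal{P}_2^n) = O(1/n)$, which together with $\abs{\partial \mathcal{P}_1^n}/\abs{\partial \mathcal{P}_2^n} = O(1/n)$ forces both ratios in the theorem statement to vanish.

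The main obstacle is not the qualitative picture, which becomes clear once one notices how $k_1$ enters the proof of Theorem~\ref{thmSpanningTreeDistribution}, but the combinatorial bookkeeping needed to realize it cleanly: designing a planar embedding in which every non-outer face has degree at most $7$ without introducing multi-edges, arranging both partitions to be exactly population-balanced with connected districts, and tracking the spanning tree scores precisely enough to establish the $O(1/n)$ ratio. A natural route is to pick $L$ and $R$ so that their spanning-tree counts are governed by a known recurrence, with the $v_0$-star acting as a multiplicative $\Theta(n)$ correction on the $\mathcal{P}_2^n$ side and a $\Theta(1)$ correction on the $\mathcal{P}_1^n$ side.
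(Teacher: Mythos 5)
Your observation about where $k_1$ enters the proof of Theorem~\ref{thmSpanningTreeDistribution}\emdash through the lower bound of Lemma~\ref{lemERDegree}, so that a high-degree vertex can push the effective resistance $r_i$ toward $0$ and hence $p_i = 1 - r_i$ toward $1$\emdash is correct, and reasoning by duality with Theorem~\ref{thmCounterexample} is the right instinct. But the concrete construction you sketch has a fatal flaw: a \emph{single} hub vertex of degree $\Theta(n)$ does not force $k_1$ to grow. The definition of $(k_1, k_2)$-boundedness already exempts one distinguished vertex $v_0$ from the degree bound precisely so that a single outlier is harmless. In your family, every vertex other than the hub has bounded degree (ladder or grid bulk plus $O(1)$-degree triangulated spokes), so each $G_n$ is $(c, 7)$-bounded for a fixed constant $c$ with the hub designated as the exceptional $v_0$, and Theorem~\ref{thmSpanningTreeDistribution} then applies directly to your family, ruling out having both limits vanish. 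This is exactly dual to the point the paper flags about Theorem~\ref{thmCounterexample}: that construction works because the cycle produces \emph{two} high-degree faces, not one, since a single bad face can be absorbed as $f_0$. A working construction for Theorem~\ref{thmDualCounterexample} must analogously have at least two vertices of unboundedly growing degree; the paper uses an annulus-like graph where radial cut edges accumulate at two poles.

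A secondary concern is the claimed $O(1/n)$ ratio of spanning tree scores. You argue that cutting half of $v_0$'s spokes ``barely lowers $\sts$ of the district containing $v_0$,'' but deleting edges can only decrease a district's spanning-tree count, and under $\mathcal{P}_1^n$ the district containing $v_0$ keeps \emph{all} of the spokes\emdash so on your own description $\sts(\mathcal{P}_1^n)$ should be the larger of the two, the opposite of what you need. The paper's mechanism is different in kind: running Algorithm~\ref{algTreeSample}, the effective resistances of the $\approx n^{4/3}$ circular cut edges are shown to decay on the order of $2/\min\{i, \sqrt{n}/2\}$, so almost all factors $1 - r_i$ are $1 - O(1/\sqrt{n})$, giving $\sts(\mathcal{P}_2^n)/\sts(G_n) \geq \frac{1}{5} \cdot 4^{-9n^{5/6}}$, while each of the $2n$ straight cut edges loses a constant factor, giving $\sts(\mathcal{P}_1^n)/\sts(G_n) \leq 4^{-n}$, and hence an exponentially small ratio. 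The key is a dynamic claim about how effective resistances degrade as edges are removed, not a static claim about redundant spokes of a hub; your intuition does not yield any bound without a computation of that kind, and as noted it also points in the wrong direction.
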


For Theorem \ref{thmCounterexample}, the main idea is to start with a large cycle, so a natural approach for Theorem \ref{thmDualCounterexample} would be to start with the dual of a large cycle; but this is a bundle of multiple edges between the same pair of vertices, which does not satisfy the conditions of the theorem. Due to this obstacle, the construction for Theorem \ref{thmDualCounterexample} is much more involved, so we defer it to the Appendix.

\section{Conclusion}\label{secConclusion}

Typically, for a heuristic sampling algorithm to be useful in practice it is not necessary to have theoretical guarantees; it merely has to ``just work.'' Unfortunately, this is clearly not the case for a problem so hotly contentious as political redistricting. One can imagine a plethora of creative ways to efficiently construct an ensemble of random graph partitions, but unless one is able to understand the distribution from which the samples are drawn, the properties of the ensemble may be meaningless from a statistical standpoint, and possibly from a legal standpoint as well.

For example, the 2017 gerrymandering court case \emph{League of Women Voters of Pennsylvania v.\txt{} Commonwealth of Pennsylvania} heard expert testimony from multiple mathematicians and political scientists using similar ensemble-based algorithms. In rebutting the use of an algorithm by Professor Jowei Chen \cite{ChenReport} to statistically conclude that the current map was gerrymandered, Professor Wendy K.\txt{} Tam Cho \cite{ChoReport} writes, ``Chen purports to have an algorithm that randomly generates maps. He has never evaluated this claim in any rigorous way. In my assessment of this `random' framework algorithm on a very small toy redistricting data set, I found that the strategy generated a biased set of maps that oversamples some maps while undersampling other maps.''

The ReCom algorithm is remarkable in that it simultaneously runs quickly and samples from a distribution that can be explicitly described. However, the description in terms of spanning trees still leaves much to be desired, and while we are not legal experts, we believe that trying to explain the concept to a court would be a nontrivial task. Our result provides the first known link between the spanning tree score and a more intuitive measure of compactness. We believe that understanding such relationships from a theoretical perspective is of great importance, especially given the fact that theorems have increasingly been playing a major role in the legal debate surrounding redistricting.

\section*{Acknowledgements}

The authors are deeply grateful to Moon Duchin and Daryl DeFord for helping us to understand the key open questions surrounding state-of-the-art redistricting algorithms. We are also grateful to our SODA reviewers for their careful reading and thoughtful comments, and, in particular, for a suggestion that led to Theorem \ref{thmDualCounterexample}.

This material is based upon work supported by the National Science Foundation Graduate Research Fellowship Program under grant DGE-1745303; by the National Science Foundation under grants CCF-2007080, IIS-2024287 and CCF-1733556; and by the Office of Naval Research under grant N00014-20-1-2488. Any opinions, findings, and conclusions or recommendations expressed in this material are those of the authors and do not necessarily reflect the views of the National Science Foundation or the Office of Naval Research.

\bibliographystyle{plain}
\bibliography{misc,ultimate}

\section*{Appendix}

\ipnc{1}{DualCounterexample}{\label{figDualCounterexample}A family of graphs $G_n$ that are each $(k_1, 7)$-bounded for increasing $k_1$, on which Theorem \ref{thmSpanningTreeDistribution} fails. The triangular subgraphs near the center can be arbitrary, and can be used to ensure the red partition is balanced. As $n \to \infty$, the red circular 2-partition will simultaneously have many more cut edges and a much greater probability of being sampled than the blue straight 2-partition. The particular instance shown is $G_3$.}

\begin{proof}[Proof of Theorem \ref{thmDualCounterexample}]
	Let $G_n$ be as depicted in Figure \ref{figDualCounterexample}, let $\mathcal{P}_1^n$ be the 2-partition defined by the blue straight line, and let $\mathcal{P}_2^n$ be the 2-partition defined by the red circle. It is possible to fill in the two triangular subgraphs so that both partitions are balanced and each $G_n$ is $(k_1, 7)$-bounded for some $k_1$, since all faces except the outer face have degree at most 7. Also,
	$$\lim_{n \to \infty} \frac{\abs{\partial \mathcal{P}_1^n}}{\abs{\partial \mathcal{P}_2^n}} = \lim_{n \to \infty} \frac{2n + 1}{2\lfloor n^{4/3} \rfloor} = 0.$$
	Thus, all that remains is to bound the probabilities of each partition being sampled, i.e., the ratio of their spanning tree scores. For this, we apply the same counting technique as in the proof Theorem \ref{thmSpanningTreeDistribution}.
	
	We begin with $\mathcal{P}_1^n$. Imagine a run of Algorithm \ref{algTreeSample} on $G_n$ in which we delete every edge in $\partial \mathcal{P}_1^n$ except for the central edge. Let the effective resistances computed by the algorithm be $\seq{r}{2n}$. Since each deleted edge is incident to a vertex of degree 2, by Lemma \ref{lemERDegree} we know that $r_i \geq \frac12$ for all $i$. Thus,
	\begin{align*}
		\frac{\sts(\mathcal{P}_1^n)}{\sts(G_n)} &= \prod_{i = 1}^{2n} (1 - r_i)\\
		&\leq \prod_{i = 1}^{2n} \left(1 - \frac12\right)\\
		&= \frac{1}{4^n}
	\end{align*}
	
	For $\mathcal{P}_2^n$, we imagine a run of Algorithm \ref{algTreeSample} on $G_n$ in which we delete edges intersecting the red circle in clockwise order, starting from where the red circle intersects the blue line (e.g., 12:00 in Figure \ref{figDualCounterexample}). Label the edges in the order of deletion,
	$$e_{\lfloor n^{4/3} \rfloor - 1}, e_{\lfloor n^{4/3} \rfloor - 2}, e_{\lfloor n^{4/3} \rfloor - 3}, \dots, e_2, e_1, e_0, e'_{\lfloor n^{4/3} \rfloor - 1}, e'_{\lfloor n^{4/3} \rfloor - 2}, e'_{\lfloor n^{4/3} \rfloor - 3}, \dots, e'_2, e'_1,$$
	and let $e'_0$ be the final edge in the circle, which is not deleted. Let $r_i$ and $r'_i$ denote the respective effective resistances of $e_i$ and $e'_i$ in the graph obtained by removing all previously deleted edges.
	
	Since $e_0$ is contained within a cycle of length 5 when deleted, we know by Lemma \ref{lemERCycle} that $r_0 \leq 1 - \frac15$, so $(1 - r_0) \geq \frac15$. All that remains is to compute upper bounds on $r_i$ and $r_i'$ for $i \geq 1$. For both cases, we apply Lemma \ref{lemRayleigh}, deleting all edges except for the subgraph containing half of the outer ring, i.e., the subgraph on the left of Figure \ref{figDualCounterexampleProof}. By symmetry, the analysis for bounding $r'_i$ is the same as for $r_i$, so we only consider $r_i$.
	
	\ipnc{1}{DualCounterexampleProof}{\label{figDualCounterexampleProof}We compute the effective resistance $r_i$ by replacing subgraphs with single edges whose effective resistances in those subgraphs are already known.}
	
	We claim that, for all $i$,
	$$r_i \leq \frac{2}{\min\left\{i, \left\lfloor\frac{\sqrt{n}}{2}\right\rfloor\right\} + 2}.$$
	We proceed by induction on $i$ (in order of increasing $i$, which is the reverse of the order in which we actually delete the edges). For the base case, $i = 0$, this states that $r_i \leq 1$, which is always true. There are two inductive cases to consider.
	
	First suppose the claim holds for $i - 1$, where $1 \leq i \leq \left\lfloor\frac{\sqrt{n}}{2}\right\rfloor$. Then, using standard series/parallel laws, we can replace subgraphs of unit resistors by non-unit resistors according to their effective resistances, as illustrated in Figure \ref{figDualCounterexampleProof}. In the final graph, we then have
	\begin{align*}
		r_i &= \frac{1}{1 + \frac{1}{r_{i - 1} + \frac2n}}\\
		&\leq \frac{1}{1 + \frac{1}{\frac{2}{i + 1} + \frac2n}} \indhyp\\
		&= \frac{1}{1 + \frac{n(i + 1)}{2n + 2i + 2}}\\
		&= \frac{2n + 2i + 2}{ni + n + 2n + 2i + 2}\\
		&\leq \frac{2}{i + 2},
	\end{align*}
	where the final equality follows from cross-multiplying:
	\begin{align*}
		i \leq \frac{\sqrt{n}}{2} &\implies i^2 + i \leq n\\
		&\implies 2i^2 + 2i \leq 2n\\
		&\implies 2ni + 2i^2 + 2i + 4n + 4i + 4 \leq 2ni + 6n + 4i + 4\\
		&\implies (2n + 2i + 2)(i + 2) \leq 2(ni + 3n + 2i + 2).
	\end{align*}
	
	Now instead suppose $i > \left\lfloor\frac{\sqrt{n}}{2}\right\rfloor$. In this case, observe that, when we compute the effective resistance in the final graph in Figure \ref{figDualCounterexampleProof},
	\begin{align*}
		r_i &= \frac{1}{1 + \frac{1}{r_{i - 1} + \frac2n}}\\
		&\leq \frac{1}{1 + \frac{1}{\frac{2}{\frac{\sqrt{n}}{2} + 2} + \frac2n}} \indhyp\\
		&= \frac{1}{1 + \frac{1}{\frac{2 + \frac{1}{\sqrt{n}} + \frac4n}{\frac{\sqrt{n}}{2} + 2}}}\\
		&= \frac{1}{1 + \frac{\frac{\sqrt{n}}{2} + 2}{2 + \frac{1}{\sqrt{n}} + \frac4n}}\\
		&= \frac{2 + \frac{1}{\sqrt{n}} + \frac{4}{n}}{\frac{\sqrt{n}}{2} + 4 + \frac{1}{\sqrt{n}} + \frac{4}{n}}\\
		&\leq \frac{2}{\frac{\sqrt{n}}{2} + 2},
	\end{align*}
	where the final equality follows from cross-multiplying:
	\begin{align*}
		\frac{1}{\sqrt{n}} \leq 1 &\implies \frac{2}{\sqrt{n}} \leq 3.5\\
		&\implies \sqrt{n} + \frac12 + \frac{2}{\sqrt{n}} + 4 + \frac{2}{\sqrt{n}} + \frac{8}{n} \leq \sqrt{n} + 8 + \frac{2}{\sqrt{n}} + \frac{8}{n}\\
		&\implies \left(2 + \frac{1}{\sqrt{n}} + \frac4n\right)\left(\frac{\sqrt{n}}{2} + 2\right) \leq 2\left(\frac{\sqrt{n}}{2} + 4 + \frac{1}{\sqrt{n}} + \frac{4}{n}\right).
	\end{align*}
	By induction, the claim holds for all $i$.
	
	Note that, as long as $n$ is sufficiently large, for $1 \leq i \leq \left\lfloor\frac{\sqrt{n}}{2}\right\rfloor$ this implies that $r_i, r'_i \leq \frac23$, and for $\left\lfloor\frac{\sqrt{n}}{2}\right\rfloor + 1 \leq i \leq \lfloor n^{4/3} \rfloor - 1$ this implies that
	$$r_i, r'_i \leq \frac{2}{\left\lfloor\frac{\sqrt{n}}{2}\right\rfloor + 2} \leq \frac{4}{\sqrt{n}}.$$
	Putting these bounds all together, we have
	\begin{align*}
		\frac{\sts(\mathcal{P}_2^n)}{\sts(G_n)} &= \left(\prod_{i = \lfloor \sqrt{n}/2 \rfloor + 1}^{\lfloor n^{4/3} \rfloor - 1} (1 - r_i)\right) \left(\prod_{i = 1}^{\lfloor \sqrt{n}/2 \rfloor} (1 - r_i)\right) \push\cdot (1 - r_0) \left(\prod_{i = \lfloor \sqrt{n}/2 \rfloor + 1}^{\lfloor n^{4/3} \rfloor - 1} (1 - r'_i)\right) \left(\prod_{i = 1}^{\lfloor \sqrt{n}/2 \rfloor} (1 - r'_i)\right)\\
		&\geq  \left(\prod_{i = \lfloor \sqrt{n}/2 \rfloor + 1}^{\lfloor n^{4/3} \rfloor - 1} \left(1 - \frac{4}{\sqrt{n}}\right)\right) \left(\prod_{i = 1}^{\lfloor \sqrt{n}/2 \rfloor} \left(1 - \frac23\right)\right) \push\cdot\left(\frac15\right) \left(\prod_{i = \lfloor \sqrt{n}/2 \rfloor + 1}^{\lfloor n^{4/3} \rfloor - 1} \left(1 - \frac{4}{\sqrt{n}}\right)\right) \left(\prod_{i = 1}^{\lfloor \sqrt{n}/2 \rfloor} \left(1 - \frac23\right)\right)\\
		&\geq \frac15 \left(\left(1 - \frac{4}{\sqrt{n}}\right)^{n^{4/3}} \left(\frac{1}{3}\right)^{n^{1/2}/2}\right)^2\\
		&= \frac15 \left(\left(\left(1 - \frac{4}{\sqrt{n}}\right)^{\frac{\sqrt{n}}{4}}\right)^{\frac{n^{4/3}}{\frac{\sqrt{n}}{4}}} \left(\frac{1}{3}\right)^{n^{1/2}/2}\right)^2\\
		&= \frac15 \left(\left(\left(1 - \frac{4}{\sqrt{n}}\right)^{\frac{\sqrt{n}}{4}}\right)^{4n^{5/6}} \left(\frac{1}{3}\right)^{n^{1/2}/2}\right)^2\\
		&\geq \frac15 \left(\left(\frac{1 - \frac{4}{\sqrt{n}}}{e}\right)^{4n^{5/6}} \left(\frac{1}{3}\right)^{n^{1/2}/2}\right)^2\\
		&\geq \frac15 \left(\left(\frac{1}{3}\right)^{4n^{5/6}} \left(\frac{1}{3}\right)^{n^{1/2}/2}\right)^2 \stext{for large enough $n$}\\
		&= \frac15 \cdot \frac{1}{3^{8n^{5/6} + n^{1/2}}}\\
		&\geq \frac15 \cdot \frac{1}{4^{9n^{5/6}}}.
	\end{align*}
	
	It follows that
	$$\frac{\Pr_{\dist}[\mathcal{P}_1^n]}{\Pr_{\dist}[\mathcal{P}_2^n]} = \frac{\sts(\mathcal{P}_1^n)/\sts(G_n)}{\sts(\mathcal{P}_2^n)/\sts(G_n)} \leq \frac{\frac{1}{4^n}}{\frac15 \cdot \frac{1}{4^{9n^{5/6}}}} = \frac{5 \cdot 4^{9n^{5/6}}}{4^n} = 5 \cdot 4^{9n^{5/6} - n},$$
	which vanishes as $n \to \infty$.
\end{proof}

\end{document}